\title{Learning stability guarantees for constrained switching linear systems from noisy observations\thanks{This work has been partially presented at the 61st IEEE Conference on Decision and Control \cite{cdc22}. More information about the contribution of the present work can be found in the introduction.}}
\author{Adrien Banse\thanks{Adrien Banse is supported by the French Community of Belgium in the framework of a FNRS/FRIA grant (\textit{Full address: }Euler building, Av. Georges Lemaître 4-6/L4.05.01, 1348 Louvain-la-Neuve, Belgium; \textit{Email address}: \texttt{adrien.banse@uclouvain.be}.).}
	\and Zheming Wang
	\and Raphaël M. Jungers\thanks{Raphaël Jungers is a FNRS honorary Research Associate. This project has received funding from the European Research Council (ERC) under the European Union's Horizon 2020 research and innovation programme under grant agreement No 864017 - L2C. Raphaël Jungers is also supported by the Innoviris Foundation and the FNRS (Chist-Era Druid-net). He is currently on sabbatical leave at Oxford University, Department of Computer Science, Oxford, UK.}
	\thanks{Adrien Banse and Raphaël M. Jungers are with ICTEAM, UCLouvain, and Zheming Wang is with the Zhejiang University of Technology.}
}
\newtheorem{thm}{Theorem}
\newtheorem{prop}{Proposition}
\newtheorem{lemma}{Lemma}
\theoremstyle{definition}
\newtheorem{ass}{Assumption}
\newtheorem{defn}{Definition}
\newtheorem{prob}{Problem}
\newtheorem{rem}{Remark}
\newcommand*{\Scale}[2][4]{\scalebox{#1}{$#2$}}%
\let\citep\cite
\begin{document}
	\maketitle
	\begin{abstract}
		 We present a data-driven framework based on Lyapunov theory to provide stability guarantees for a family of hybrid systems. In particular, we are interested in the asymptotic stability of switching linear systems whose switching sequence is constrained by labeled graphs, namely constrained switching linear systems. In order to do so, we provide chance-constrained bounds on stability guarantees, that can be obtained from a finite number of noisy observations. 

We first present a method providing stability guarantees from sampled trajectories in the hybrid state-space of the system. We then study the harder situation where one only observes the continuous part of the hybrid states. We show that in this case, one may still obtain formal chance-constrained stability guarantees. For this latter result we provide a new upper bound of general interest, also for model-based stability analysis.\\
		 \textit{Keywords: }Data-driven stability analysis; Switched systems; Graph theory; Chance-constrained optimization.
	\end{abstract}
	
	\section{Introduction}
Recently, industry, society and technology have increasingly relied on complex systems. For example, \emph{Cyber-Physical Systems} (or \emph{CPS} for short) are dynamical systems in which there are constant interactions between physical and digital, computational devices \citep{alur_2015_cps}. Such interactions lead to hybrid dynamics, where discrete and continuous dynamical behaviours are intertwined, placing the analysis of \emph{hybrid systems} at the center of a great research effort \citep{tabuda_2009_verification, goebel_2012_hybrid}. 

\emph{Switching systems} are dynamical systems in which a \emph{switching rule} plays a non-trivial role \citep{liberzon_1999_basic, liberzon_2003_switching, jungers_joint_2009}. These systems consist of an important family of hybrid systems, that often arise in CPS applications \citep{tabuda_2009_verification}. In particular, switching \emph{linear} systems have been extensively studied recently. Indeed, many model-based techniques have been developed in the past years to analyze their stability \citep{rota_1960_jsr, parrilo_2007_SOS, jungers_joint_2009, ahmadi_2014_pathcomplete}. Some of these tools have been extended to more general models, in which the switching rule is constrained by a labeled graph. These systems, on which we focus in this paper, are called \emph{constrained switching linear systems} (or \emph{CSLS} for short) and have an extensive range of applications, such as Networked Control Systems \citep{donkers_2011_NCS, jungers_2016_NCS}, or coordination
of multi-agents systems \citep{jadbabaie_2003_coordination}. In this context, the \emph{constrained joint spectral radius} (\emph{CJSR} for short), as well as new families of Lyapunov functions have been introduced to take into account these constraints \citep{dai_gelfand-type_2012, philippe2014converse, philippe_stability_2016}. 

In many practical applications, the engineer may not have access to the model of the dynamical system. This could occur for multiple reasons, such as the intrinsic complexity of the model, or for security purposes. In such cases, one cannot rely on model-based methods such as those cited above, but rather has to analyze the underlying system in a data-driven fashion. Moreover, the central tool to study stability of switching system, the joint spectral radius, is known to be NP-hard to compute, and without any algebraic solution \citep{blondel_2000_NP}. For these reasons, novel data-driven techniques have been recently developed for switching linear systems, but where the switching rule is supposed to be arbitrary. Some of them provide chance-constrained bounds on the joint spectral radius \citep{kenanian_data_2019, berger_chance-constrained_2021, rubbens2021datadriven}, using \emph{scenario optimization} as their main tool \citep{campi_2018_introduction}. Other methods extend classical tools from machine learning such as $k$-means clustering \citep{lauer_2013_regression}. Recently, \citep{lauer_2022_statistical} also provides formal guarantees on switching linear system identification based on statistical learning theory. In this paper, we present a method based on the scenario optimization to compute an approximate upper bound on the CJSR. Our approach aims at providing a data-driven method for analyzing an unknown system, but could also be used as a randomized analysis for (model-based) computation of the CJSR of a known system.

\paragraph{Contributions} In this paper, we provide a data-driven method to assess the quality of approximation of the CJSR of a CSLS based on a finite set of observations of trajectories, thereby generalizing \cite{kenanian_data_2019, berger_chance-constrained_2021, rubbens2021datadriven} to the constrained case and therefore taking into account hybrid state-spaces, which consists in a supplementary technical challenge. We investigate several different types of problems, depending on the information available on the trajectories. Preliminary versions of our results are presented in \cite{cdc22, mtns22}, sometimes without proofs, or using a different technique. On top of providing complete proofs our contribution is threefold compared with the latter references. First, we present a unified framework for the different types of bounds, using chance-constrained results on quasi-linear programs \cite{berger_chance-constrained_2021}. Second, we extend our methods to multiple steps and noisy observations, whereas 1-step observations are assumed in \cite{cdc22}, and noise-free observations are assumed in \cite{cdc22, mtns22}. In particular, our method allows to measure the impact of the noise on the approximation of the CJSR, which represents an additional motivation for our data-driven approach compared to the references cited above. Moreover, our results do not depend on the distribution of the noise, which is an asset for practical applications. Finally, we demonstrate our methods on a new numerical example, corresponding to a real-life application.

\paragraph{Outline} The rest of this paper is organized as follows. In Section~\ref{preli}, we first define model-based tools for studying stability of CSLS, and then we present chance-constrained optimization results. In Section~\ref{hybridStates}, we provide a probabilistic upper bound on the value of the CJSR, assuming access to a finite number of hybrid states. In Section~\ref{continuousStates}, we show that we are still able to provide a probabilistic upper bound on the CJSR assuming that we only have access to less information, that is only continuous states. Finally, results are illustrated in Section~\ref{application} on a numerical example corresponding to a real application, namely Network Control Systems.

\paragraph{Notations}
In this work, $\mathbb{R}$ is the set of real numbers, and $\mathbb{Z}$ the set of integers. In particular, $\mathbb{Z}_{> 0}$ denotes the strictly positive integers. $\mathbb{R}_{> 0}$ and $\mathbb{R}_{\geq 0}$ respectively denote the set of positive and non-negative real numbers. For the sake of readability, for any $m \in \mathbb{Z}_{> 0}$ we define $[m] = \{1, \dots, m\}$. The set of real vectors of dimension $n$ is noted $\mathbb{R}^n$. The set of real matrices of dimension $n\times n$ is noted $\mathbb{R}^{n \times n}$, and the set of symmetric positive definite matrices is denoted by $\mathcal{S}^n$. For any real matrix $A \in \mathbb{R}^{n \times n}$, $\lambda^{\max}(A)$ and $\lambda^{\min}(A)$ are respectively the maximal and minimal eigenvalues of $A$, and $\det(A)$ is its determinant. In terms of norms, $\|\cdot\|$ denotes the Euclidian norm, and $\|\cdot\|_F$ the Frobenius norm. The set $\mathbb{S}^n$ is the unit sphere of dimension $n$, that is $\{x \in \mathbb{R}^n \,|\, \|x\| = 1 \}$, $\mathbb{B}^n$ is the unit ball of dimension $n$, that is $\{x \in \mathbb{R}^n \,|\, \|x\| \leq 1 \}$, and, for any $W \in \mathbb{R}_{>0}$, $B^n[W]$ is the ball of radius $W$, that is $\{x \in \mathbb{R}^n \,|\, \|x\| \leq W \}$. Finally, for $\Omega \subset \mathbb{R}^n$ and $A \in \mathbb{R}^{n \times n}$, the image of $\Omega$ by applying $A$ is noted $A\Omega = \{Ax \,|\, x \in \Omega \}$, the projection of $\Omega$ on $\mathbb{S}^n$ is noted $\Pi_{\mathbb{S}^n}(\Omega)$ and $\text{conv}(\Omega)$ stands for the convex hull of $\Omega$.

	\section{Preliminaries} \label{preli}

In this section, we recall some concepts needed to present our methods. We first provide a formal definition of CSLS. Then, we recall some principles of data-driven analysis of these systems from \citep{dai_gelfand-type_2012, philippe_stability_2016}. In particular, we define the notion of asymptotic stability, the property that we investigate for such dynamical systems, as well as a Lyapunov framework to study it. 

First, we formally define the labeled graph.
\begin{defn}[Labeled graph, {\citep[Definition~2.2.1~and~3.1.1]{lind_introduction_1995}}] A \emph{graph} $G$ is composed of a finite set of \emph{nodes} $\mathcal{U}(G)$ and a finite set of \emph{edges} $\mathcal{E}(G)$. Each edge $e \in \mathcal{E}(G)$ \emph{starts} at $s(e) \in \mathcal{U}(G)$, and \emph{terminates} at $t(e) \in \mathcal{U}(G)$. For a given $m \in \mathbb{Z}_{>0}$, a \emph{labeled graph} $\mathcal{G}$ is a pair $(G, \sigma)$ where $G$ is a graph and $\sigma : \mathcal{E}(G) \to [m]$ is the \emph{labeling} and assigns to each edge $e \in \mathcal{E}(G)$ a label $\sigma(e)$ in the finite set $[m]$.
\end{defn}
Note that for a given labeled graph $\mathcal{G} = (G, \sigma)$, $\mathcal{U}(G)$ and $\mathcal{E}(G)$ are respectively noted $\mathcal{U}$ and $\mathcal{E}$ when it is clear from the context. Starting from a labeled graph $\mathcal{G} = (G, \sigma)$, and for $l \in \mathbb{Z}_{> 0}$, the \emph{language of $\mathcal{G}$ restricted to length $l$}, noted $\mathcal{L}_{\mathcal{G}, l}$ is the set of all possible sequences of labels of length $l$ in the labeled graph $\mathcal{G}$. It is defined as 
\begin{equation} 
	\mathcal{L}_{\mathcal{G}, l} = \left\{ (\sigma(e_0), \dots, \sigma(e_{l-1})) \, | \, \forall e_0 \in \mathcal{E}, i \in [l-2] : t(e_i) = s(e_{i+1}) \right\}.
\end{equation}
The \emph{language} of the labeled graph $\mathcal{G}$, noted $\mathcal{L}_{\mathcal{G}}$, is the set of all possible sequences in the labeled graph $\mathcal{G}$, that is
$
	\mathcal{L}_{\mathcal{G}} = \bigcup_{l \in \mathbb{Z}_{> 0}}^{\infty} \mathcal{L}_{\mathcal{G}, l}.
$
An illustration of a labeled graph, is given in Figure~\ref{labeledgraphsimple}. 

\begin{figure}[h!]
\centering
\begin{tikzpicture}[shorten >=1pt,node distance=4cm,on grid,auto] 
   \node[state] (a) {$a$}; 
   \node[state] (b) [right = of a] {$b$}; 
   \path[->]
   	(a) edge[bend left] node {2} (b)
	(b) edge[bend left] node {1} (a)
	(a) edge[loop left] node {1} (a);
\end{tikzpicture}
\caption{Illustration of a labeled graph $\mathcal{G} = (G,\sigma)$, where $\mathcal{U}(G) = \{a, b\}$, and where $\mathcal{E}(G) = \{e_1, e_2, e_3\}$ such that $s(e_1) = a$, $t(e_1) = a$, $\sigma(e_1) = 1$, $s(e_2) = a$, $t(e_2) = b$, $\sigma(e_2) = 2$, $s(e_3) = b$, $t(e_3) = a$, $\sigma(e_3) = 1$. The language $\mathcal{L}_{\mathcal{G}}$ contains all the words, excepted those that have two consecutive occurences of "$2$" in them.}
\label{labeledgraphsimple}
\end{figure}
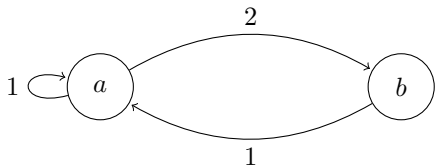

In this work, we are interested in a particular subtype of discrete-time hybrid systems, namely constrained switching systems. A hybrid system is a dynamical system whose dynamics is characterized by both continuous and discrete behaviors \cite{goebel_2012_hybrid, alur_2015_cps}. Starting from $m \in \mathbb{Z}_{> 0}$, consider a labeled graph $\mathcal{G} = (G, \sigma)$ where we recall that the labeling is defined as $\sigma : \mathcal{E}(G) \to [m]$. Consider also a finite set of matrices $\mathcal{A} = \{A_1, \dots, A_m\}$. A \emph{constrained switching linear system} (or \emph{CSLS} for short) is defined by states of the form $(e, x)$ taking values in the hybrid \emph{state-space} $\mathcal{E}(G) \times \mathbb{R}^n$. The dynamics of a CSLS is given by the following hybrid relations: 
\begin{equation}
\label{hybrid_dynamics}
\begin{aligned}
	e(k+1) &\in \{e \in \mathcal{E}(G) \, | \, s(e) = t(e(k)) \} &\textrm{(discrete behavior)}, \\
	x(k+1) &=A_{\sigma(e(k+1))}x(k)&\textrm{(continuous behavior)}, 
\end{aligned}
\end{equation}
with $e(0) = e_0 \in \mathcal{E}$ and $x(0) = x_0 \in \mathbb{R}^n$.

A CSLS is thus entirely characterized by a labeled graph $\mathcal{G} = (G, \sigma)$, and a finite set of matrices $\mathcal{A}$. It is therefore naturally noted $S(\mathcal{G}, \mathcal{A})$.

\begin{rem} \label{flower}
	An \emph{arbitrary switching linear system} (\emph{ASLS} for short), i.e. a switching system whose switching rule is unconstrained, can be seen as a special case of a CSLS. An ASLS defined by a set of matrices $\mathcal{A} = \{A_1, \dots, A_m\}$ is a CSLS $S(\mathcal{F}_m, \mathcal{A})$ where $\mathcal{F}_m = (F_m, \sigma)$ with $F_m$ a directed graph only composed of one node and $m$ self-loops $e_1, \dots, e_m$, and $\sigma$ such that, for $i = 1, \dots, m$, $\sigma(e_i) = i$. An illustration of $\mathcal{F}_m$ is provided in Figure~\ref{flowerASLS}.
\end{rem}
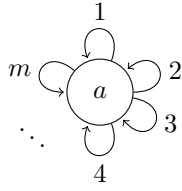
\begin{figure}[h!] \centering
\begin{tikzpicture}[shorten >=1pt,node distance=3cm,on grid,auto] 
   \node[state] (1) {$a$}; 
   \path[->]
   (1) edge [out=70,in=110,looseness=5] node [above] {1} (1)
   (1) edge [out=0,in=40,looseness=5] node [right] {2} (1)
   (1) edge [out=-70,in=-110,looseness=5] node [below] {4} (1)
   (1) edge [out=140,in=180,looseness=5] node [left] {$m$} (1)
   (1) edge [out=-10,in=-50,looseness=5] node [right] {3} (1);
   \node[] at (-0.9,-0.5) {$\ddots$};
\end{tikzpicture}
\caption{Illustration of the labeled graph $\mathcal{F}_m = (F_m, \sigma)$. The language $\mathcal{L}_{\mathcal{F}_m}$ contains all words composed of $1, 2, \dots, m$.} 
\label{flowerASLS}
\end{figure}

In this work, we are interested in the asymptotic stability of such hybrid systems. A hybrid system is \emph{asymptotically stable} (or stable, for simplicity) if, for any $(e_0, x_0) \in \mathcal{E} \times \mathbb{R}$ such that $e(0) = e_0$ and $x(0) = x_0$, 
\begin{equation}
	\lim_{k \to \infty} \|x(k)\| = 0.
\end{equation}

To study stability, the \emph{constrained joint spectral radius} (\emph{CJSR} for short), has been introduced in \cite{dai_gelfand-type_2012} to generalize the notion of \emph{joint spectral radius} \cite{jungers_joint_2009} to the constrained case.
\begin{defn}[Constrained joint spectral radius, \citep{dai_gelfand-type_2012}] \label{CJSRdef}
	Given a labeled graph $\mathcal{G}$ and a set of matrices $\mathcal{A}$, the \emph{constrained joint spectral radius} of the CSLS $S(\mathcal{G}, \mathcal{A})$, noted $\rho(\mathcal{G}, \mathcal{A})$, is defined as 
	\begin{equation}
		\rho(\mathcal{G}, \mathcal{A}) 
		= \lim_{k \to \infty} \max_{(\sigma(0), \dots, \sigma(t-1)) \in \mathcal{L}_{\mathcal{G}, k}} \left\| A_{\sigma(k-1)} \dots A_{\sigma(0)} \right\|^{1/k}.
	\end{equation}
\end{defn} 

As the following proposition shows, the CJSR characterizes the stability of a CSLS.
\begin{prop}[{\citep[Corollary~2.8]{dai_gelfand-type_2012}}] \label{CJSRstability}
	Given a labeled graph $\mathcal{G}$ and a set of matrices $\mathcal{A}$, the CSLS $S(\mathcal{G}, \mathcal{A})$ is stable if and only if $\rho(\mathcal{G}, \mathcal{A}) < 1$.
\end{prop}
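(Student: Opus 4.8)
The plan is to prove the two implications separately, the first being elementary and the second carrying essentially all the weight. Throughout, write $M_k = \max_{(\sigma_0,\dots,\sigma_{k-1}) \in \mathcal{L}_{\mathcal{G},k}} \|A_{\sigma_{k-1}}\cdots A_{\sigma_0}\|$, so that $\rho(\mathcal{G},\mathcal{A}) = \lim_{k\to\infty} M_k^{1/k}$ by Definition~\ref{CJSRdef}. The starting observation is that for any admissible trajectory the continuous state obeys $x(k) = A_{\sigma(e(k))}\cdots A_{\sigma(e(1))} x(0)$, where $(\sigma(e(1)),\dots,\sigma(e(k)))$ is by construction an element of $\mathcal{L}_{\mathcal{G},k}$; hence $\|x(k)\| \le M_k \|x(0)\|$ uniformly over all admissible switching sequences.

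For the sufficiency of $\rho(\mathcal{G},\mathcal{A}) < 1$, I would fix $\hat\rho$ with $\rho(\mathcal{G},\mathcal{A}) < \hat\rho < 1$. Since $M_k^{1/k} \to \rho(\mathcal{G},\mathcal{A})$, there is $K$ such that $M_k \le \hat\rho^{\,k}$ for all $k \ge K$, and therefore $\|x(k)\| \le \hat\rho^{\,k}\|x(0)\| \to 0$ for every initial condition and every admissible switching sequence, which is exactly asymptotic stability.

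The converse is the hard part, and I would argue by contraposition, showing that $\rho(\mathcal{G},\mathcal{A}) \ge 1$ forbids stability. The first step is to establish submultiplicativity of $(M_k)$: every admissible word of length $k+l$ factors as an admissible word of length $l$ followed by a compatible admissible word of length $k$, so $M_{k+l} \le M_k M_l$ by submultiplicativity of the norm. Fekete's lemma then upgrades the limit in Definition~\ref{CJSRdef} to $\rho(\mathcal{G},\mathcal{A}) = \inf_{k} M_k^{1/k}$, whence $\rho(\mathcal{G},\mathcal{A}) \ge 1$ yields $M_k \ge 1$ for every $k$. Thus for each length $k$ there is an admissible product with norm at least $1$.

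The remaining and genuinely delicate step is to convert this family of norm-$\ge 1$ products, indexed by length, into a \emph{single} admissible trajectory that does not vanish. A naive extraction through K\"onig's lemma on the (finitely branching) tree of admissible paths fails, because a long product having norm $\ge 1$ does not lower-bound the norms of its prefixes. I would instead pass through the constrained analogue of the Berger--Wang theorem, also available in \cite{dai_gelfand-type_2012}, which identifies $\rho(\mathcal{G},\mathcal{A})$ with the growth rate of the spectral radii of products taken along \emph{cycles} of $\mathcal{G}$. From $\rho(\mathcal{G},\mathcal{A}) \ge 1$ this produces a cyclic admissible product with spectral radius at least $1$; taking $x(0)$ along a corresponding (real) eigen-direction and repeating the cycle periodically yields a trajectory with $\|x(k)\|$ bounded away from $0$, contradicting stability. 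An alternative route avoiding Berger--Wang is to invoke the existence of an extremal norm in the irreducible case, which again reduces to a periodic extremal trajectory. Either way, this spectral/compactness step, guaranteeing that the supremal growth is \emph{attained} (or approached) along a repeatable cycle rather than merely along a sequence of unrelated finite products, is where I expect the main obstacle to lie in the constrained setting.
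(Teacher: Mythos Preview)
The paper does not prove this proposition: it is cited as \cite[Corollary~2.8]{dai_gelfand-type_2012} and stated without argument, serving only as background for the data-driven results that follow. There is therefore no proof in the paper to compare your attempt against.

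On the substance of your sketch: the sufficiency direction and the submultiplicativity/Fekete step are correct. The Berger--Wang route, however, has a genuine gap at the boundary $\rho(\mathcal{G},\mathcal{A}) = 1$. The constrained Berger--Wang identity only yields, for each $\varepsilon>0$, a cycle of some length $k_\varepsilon$ whose product has spectral radius at least $(1-\varepsilon)^{k_\varepsilon}$; it does \emph{not} deliver a single cycle with spectral radius $\ge 1$. Asserting the latter is precisely the finiteness property, known to fail already for arbitrary switching, so your sentence ``this produces a cyclic admissible product with spectral radius at least $1$'' is unjustified when $\rho=1$. Your fallback through an extremal norm is the right idea, but note that the resulting non-vanishing trajectory is in general \emph{not} periodic either: in the irreducible case a Barabanov-type multinorm $\{\|\cdot\|_{*,u}\}_{u\in\mathcal{U}}$ exists (see \cite{philippe2014converse}) satisfying $\max_{e:\,s(e)=u}\|A_{\sigma(e)}x\|_{*,t(e)}=\rho\,\|x\|_{*,u}$, and greedily selecting at each step an edge attaining this maximum produces an admissible trajectory whose multinorm at step $k$ equals $\rho^{k}$ times its initial value, hence does not vanish when $\rho\ge 1$. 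The reducible case is then handled by block-triangularisation.
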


Using a slightly different formalism than in \citep{philippe_stability_2016}\footnote{Definition~\ref{lyap_def} is equivalent to the definition of \emph{multinorms} in \citep{philippe_stability_2016}. We prefer the more general term \emph{Lyapunov function}, that we define in the hybrid framework in Definition~\ref{lyap_def}, and that we generalize by a slight abuse of language to values of $\gamma$ that may be greater or equal to 1.}, for some $\gamma \in \mathbb{R}_{>0}$, we now introduce the notion of Lyapunov function for CSLS.
\begin{defn}[Lyapunov function]
	\label{lyap_def}
	Given a labeled graph $\mathcal{G} = (G, \sigma)$ and a finite set of matrices $\mathcal{A} = \{A_1, \dots, A_m\}$, a \emph{Lyapunov function} for a CSLS $S(\mathcal{G}, \mathcal{A})$ is a function $V : \mathcal{U} \times \mathbb{R}^n \to \mathbb{R}_{\geq 0}$ such that, for all $u \in \mathcal{U}$, the function $V(u, \cdot)$ is a \emph{norm}\footnote{In the sense that it is \emph{subadditive}, \emph{absolutely homogeneous} and \emph{positive definite}.}, and for all $x \in \mathbb{R}^n$, $e \in \mathcal{E}$, there exists $\gamma \in \mathbb{R}_{> 0}$ such that
		\begin{equation}
			V\left(t(e), A_{\sigma(e)}x\right) \leq \gamma V(s(e), x).
		\end{equation}
	The quantity $\gamma(V)$ associated to the Lyapunov function $V$ is defined as 
	\begin{equation}
		\gamma(V) = \min_\gamma \{ \gamma \, |\,  \forall e \in \mathcal{E}, x \in \mathbb{R}^n :  V\left(t(e), A_{\sigma(e)}x\right) \leq \gamma V(s(e), x) \}.
	\end{equation}
\end{defn}
It has been proven in \citep[Proposition~2.2]{philippe_stability_2016} that the CJSR of a given CSLS is also defined as follows:
\begin{equation} \label{CJSRotherdef}
	\rho(\mathcal{G}, \mathcal{A}) = \inf_V \{ \gamma(V) \,|\, V \text{ is a Lyapunov function for } S(\mathcal{G}, \mathcal{A})\}.
\end{equation}

In this work, we restrict ourselves to the family of quadratic Lyapunov functions.
\begin{defn}[Quadratic Lyapunov function]
	Given a labeled graph $\mathcal{G} = (G, \sigma)$ and a finite set of matrices $\mathcal{A} = \{A_1, \dots, A_m\}$, a \emph{quadratic Lyapunov function} is a Lyapunov function for the CSLS such that, for all $u \in \mathcal{U}$, $x \in \mathbb{R}^n$, $V(x, u) = \| x \|_{P_u} = \sqrt{x^T P_u x}$ for some $P_u \in \mathcal{S}^n$. 
\end{defn}

Let us now define the following quantity:
\begin{equation} \label{quadraticupperboundCJSR}
	\gamma^*(\mathcal{G}, \mathcal{A}) = \inf_V \{ \gamma(V) \,|\, V \text{ is a quadratic Lyapunov function for } S(\mathcal{G}, \mathcal{A})\}.
\end{equation}
Since the family of quadratic Lyapunov functions is a subset of the family of all Lyapunov functions, the latter quantity is an upper bound on the CJSR \citep[Theorem~3.1]{philippe_stability_2016}: 
\begin{equation}
	\rho(\mathcal{G}, \mathcal{A}) \leq \gamma^*(\mathcal{G}, \mathcal{A}).
\end{equation}

We now introduce the notion of \emph{$l$-lifting} of a CSLS. This notion and its link to the CJSR have been introduced in \citep[Section 3.1]{philippe_stability_2016}. 
\begin{defn}[$l$-product lifting, \cite{philippe_stability_2016}]
\label{product_lifting}
	Given a labeled graph $\mathcal{G} = (G, \sigma)$, a finite set of matrices $\mathcal{A} = \{A_1, \dots, A_m\}$, and a length $l \in \mathbb{N}_{> 0}$, the \emph{$l$-product lift} of $S(\mathcal{G}, \mathcal{A})$ is a CSLS $S(\mathcal{G}^{(l)}, \mathcal{A}^{(l)})$, with 
	\begin{itemize}
		\item $\mathcal{G}^{(l)} = (G^{(l)}, \sigma^{(l)})$, where $\mathcal{U}(G^{(l)}) = \mathcal{U}(G)$, and $e' \in \mathcal{E}(G^{(l)})$ if there exists a path $(e_0, \dots, e_{l-1})$ in $G$ with $s(e') = e_0$ and $t(e') = e_{l-1}$, in which case 
		\begin{equation}
			\sigma^{(l)}(e') = (\sigma(e_0), \dots, \sigma(e_{l-1})) \in \mathcal{L}_{\mathcal{G}, l}.
		\end{equation}
		\item $\mathcal{A}_{\mathcal{G}}^{(l)} = \{ A_{\sigma(e_{l-1})} \dots A_{\mathcal{L}(e_0)} \,|\, (\sigma(e_0), \dots, \sigma(e_{l-1}) \in \mathcal{L}_{\mathcal{G}, l} \}$. For any $e' \in \mathcal{E}(G^{(l)})$ corresponding to the initial path $(e_{0}, \dots, e_{l-1})$, the matrix $A_{\sigma^{(l)}(e')} = A_{\sigma(e_{l-1})}\dots A_{\sigma(e_{0})}$.
	\end{itemize}
\end{defn}

The $l$-lifting $S(\mathcal{G}^{(l)}, \mathcal{A}_{\mathcal{G}}^{(l)})$ of any CSLS $S(\mathcal{G}, \mathcal{A})$ must be interpreted in the following way: for some $l, N \in \mathbb{N}_{>0}$, if $(x_0, e_0), (x_1, e_1), (x_2, e_2), \dots, (x_{lN}, e_{lN})$ is a trajectory of $S(\mathcal{G}, \mathcal{A})$, then $(x_0, e_0), (x_{l}, e_{l}), (x_{2l}, e_{2l}),  \dots,$ $(x_{lN}, e_{lN})$ is a trajectory of $S(\mathcal{G}^{(l)}, \mathcal{A}_{\mathcal{G}}^{(l)})$. An illustration of the labeled graph $\mathcal{G}^{(l)}$ is given in Figure~\ref{labeledgraphlifted}.

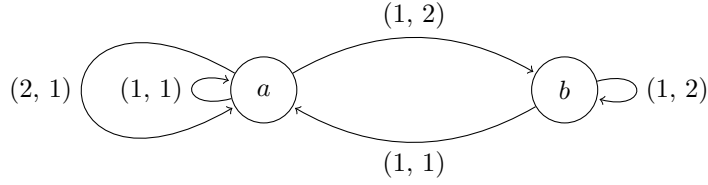
\begin{figure}[h]
\centering
\begin{tikzpicture}[shorten >=1pt,node distance=4cm,on grid,auto] 
   \node[state] (a) {$a$}; 
   \node[state] (b) [right = of a] {$b$} ; 
   \path[->]
   	(a) edge[bend left] node {(1, 2)} (b)
	(b) edge[bend left] node {(1, 1)} (a)
	(b) edge[loop right] node {(1, 2)} (b)
	(a) edge[loop left] node {(1, 1)} (a)
	(a) edge[loop left, in = 210, out = 150, looseness = 18] node {(2, 1)} (a);
\end{tikzpicture}
\caption{Illustration of $\mathcal{G}^{(2)}$, the $2$-lifting labeled graph of $\mathcal{G}$, illustrated in Figure~\ref{labeledgraphsimple}.}
\label{labeledgraphlifted}
\end{figure}

Using this lift for $l > 1$ allows to refine the upper bound on the CJSR obtained with quadratic Lyapunov functions, compared to the classical case $l = 1$. Indeed, it has been proven in \citep[Theorem~3.2]{philippe_stability_2016} that 
\begin{equation} \label{l-lifting-eq}
	\rho(\mathcal{G}, \mathcal{A}) \leq \gamma^*(\mathcal{G}^{(l)}, \mathcal{A}_{\mathcal{G}}^{(l)})^{\frac{1}{l}} \leq \rho(\mathcal{G}, \mathcal{A})^{\frac{1}{nl}}, 
\end{equation}
where $n$ is the dimension of the system.

In the following sections, we will show how to construct a quadratic Lyapunov function from a finite amount of data. Using the notion of $l$-lifting, we will present a general method assuming $l$-steps observations. This will then give us an approximate upper bound on the value of the CJSR of the unknown CSLS, extending geometrical properties used in \cite{kenanian_data_2019}.

	\section{Learning stability guarantees from observations of hybrid states}
\label{hybridStates}

In this section, we consider a CSLS $S(\mathcal{G}, \mathcal{A})$, where the labeled graph $\mathcal{G}$ and the set of matrices $\mathcal{A}$ are supposed to be unknown. Similarly to \citep{kenanian_data_2019, berger_chance-constrained_2021} in the unconstrained case, we will approximate the CJSR of this system from a finite number of observations of the hybrid state-space of the CSLS. Finally, this will allow for deriving probabilistic stability guarantees for the unknown system.

\subsection{Problem formulation} \label{hybrid_pf}

For some unknown labeled graph $\mathcal{G} = (G, \sigma)$ and set of matrices $\mathcal{A} = \{A_1, \dots, A_m\}$, consider the CSLS $S(\mathcal{G}, \mathcal{A})$, for which one can only sample a finite set of $N$ observations of length $l$. Consider its $l$-product lift $S(\mathcal{G}^{(l)}, \mathcal{A}_{\mathcal{G}}^{(l)})$, where $\mathcal{G}^{(l)} = (G^{(l)}, \sigma^{(l)})$ according to Definition~\ref{product_lifting}. A \emph{sample of length $l$} is defined as the triplet $(x, e, w)$, where 
\begin{itemize}
	\item $x \in \mathbb{S}^n$ is an initial continuous state: $x(0) = x$ in \eqref{hybrid_dynamics}\footnote{We can restrict ourselves to the unit sphere without loss of generality thanks to \emph{homogeneity} of the dynamical system: for all $\mu > 0$ and $A \in \mathcal{A}$, $A(\mu x) = \mu Ax$.}. It is sampled according to the uniform probability measure $\mu^{x}$ on $\mathbb{S}^n$.
	\item $e \in \mathcal{E}(G^{(l)})$ is an edge of the $l$-product lift. It is assumed to be sampled according to the uniform probability measure $\mu_{l}^{e}$ on $\mathcal{E}(G^{(l)})$.
	\item $w \in B^n[W]$ is a noise on the continuous state-space, where $W > 0$ is a known upper bound. It is sampled according to some probability measure $\mu^w$ on $B^n[W]$.
\end{itemize}
The \emph{sample set} is therefore defined as 
\begin{equation} \label{hybrid_observations}
	\omega_N = \left\{\left(x_i, e_i, w_i\right)\right\}_{i = 1, \dots, N} \subset \Delta, 
\end{equation}
where $\Delta = \mathbb{S}^n \times \mathcal{E}(G^{(l)}) \times B^n[W]$. Each element $(x_i, e_i, w_i) \in \omega_N$ is sampled independently of the others and according to the product probability measure $\pi_l = \mu^x \times \mu^e_l \times \mu^w$.

\begin{rem} \label{hybrid_link_samples_observations}
In the context of this section, we define an \emph{observation} as a hybrid state $(x, u) \in \mathbb{R}^n \times \mathcal{U}(G)$, that is a couple of continuous state and node. The link between the sample $(x_i, e_i, w_i)$ and the pair of observations $(x_i, u_i), (y_i, v_i)$, is that $y_i = A_{\sigma^{(l)}(e_i)}x_i + w_i$, where $s(e_i) = u_i$, and $t(e_i) = v_i$. It is important to highlight that, even though the experiments are defined by the samples $(x_i, e_i, w_i)$, one only observes the pairs of hybrid states $(x_i, u_i), (y_i, v_i)$ as described above.
\end{rem}

Our main problem is stated as follows.
\begin{prob} \label{problem1}
	Consider an unknown CSLS $S(\mathcal{G}, \mathcal{A})$, where $\mathcal{G} = (G, \sigma)$. For a fixed length $l \in \mathbb{Z}_{>0}$, assume that one is given $N$ pairs of observations, as described in Remark~\ref{hybrid_link_samples_observations}. For some user-defined confidence level of at least $\beta \in (0, 1)$, and, knowing $|\mathcal{U}(G)|$, the number of nodes, as well as $|\mathcal{E}(G^{(l)})|$, the number of edges of the $l$-lifting, provide an upper bound $ \rho(\mathcal{G}, \mathcal{A}) \leq \overline{\rho}$.
\end{prob}
\begin{rem}
	Although the uniform setting was chosen in the context of this work, we attract the attention of the reader on the fact that more general settings can be considered. For example, one can assume that the distribution on the edges is unknown, and that only an upper bound on the number of nodes and edges is known \citep{mtns22}. One can also assume that labels are sampled instead of edges \citep{cdc22}. However, because the latter introduce conservativism, we choose not to present the last-mentioned frameworks. This remark holds for the results of Section~\ref{continuousStates} as well.
\end{rem}

\subsection{Results}

In this section, we propose a solution for Problem~\ref{problem1}. In order to do so, following the lifting result \eqref{l-lifting-eq}, we will approximate a quadratic Lyapunov function for the unknown $l$-product lift $S\left(\mathcal{G}^{(l)}, \mathcal{A}^{(l)}\right)$. Let us thus introduce the following quadratic \emph{sampled program}\footnote{In equation \eqref{scenario_program_1}, $(\lambda, c(P_1, \dots, P_{\mathcal{U}}))$ means that we optimize the objective in the \emph{lexicographical order}. That is, $(\lambda_1, c_1) > (\lambda_2, c_2)$ if $\lambda_1 > \lambda_2$, or $\lambda_1 = \lambda_2$ and $c_1 > c_2$, and $(\lambda_1, c_1) = (\lambda_2, c_2)$ if $\lambda_1 = \lambda_2$ and $c_1 = c_2$.}:
\begin{subequations}
\label{scenario_program_1}
\begin{align}
	\mathcal{P}(\omega_N): &\min_{\substack{\lambda \in \mathbb{R}_{\geq 0} \\ \forall u \in \mathcal{U}: P_u \in \mathbb{R}^{n \times n}}} \left(\lambda, c\left(P_1, \dots, P_{|\mathcal{U}|}\right)\right) \\
	\text{s.t. } 
	&\forall u \in \mathcal{U}:\,  P_u \in \mathcal{X} = \{ P \in \mathbb{R}^{n \times n} \, | \, P \succeq I, \| P \|_F \leq C \}, \label{PgeqI}  \\
	&\forall (x, e, w) \in \omega_N: 
	\left\| 
	A_{\sigma^{(l)}(e)} x + w
	\right\|_{P_{t\left(e\right)}}
	\leq 
	\lambda^{l} 
	\left\|
	x
	\right\|_{P_{s\left(e\right)}} \label{lyap_constraint}, 
\end{align}
\end{subequations}
where $c: \mathcal{X^{|\mathcal{U}|}} \to \mathbb{R} : (P_1, \dots, P_{|\mathcal{U}|}) \mapsto \max (\|P_1\|_F, \dots, \|P_{|\mathcal{U}|}\|_F)$, and $C$ is a large predefined value (say, $C = 10^6$).

We denote the optimal solutions of the sampled program~\eqref{scenario_program_1} by $\lambda^*(\omega_N)$ and $\left\{ P_u^*(\omega_N)\right\}_{u \in \mathcal{U}}$. This program is a data-driven version of the approximation of the CJSR with a quadratic Lyapunov function, as described in~\eqref{quadraticupperboundCJSR}, and using $l$-product liftings, as shown in Equation~\eqref{l-lifting-eq}. In addition to generalizing data-driven programs expressed in \citep{kenanian_data_2019, berger_chance-constrained_2021} to the constrained case, our formulation takes into account noise, as one can see in constraints~\eqref{lyap_constraint}.

We now define the concept of \emph{non-degenerate sample set} for a sampled program.
\begin{defn}[Non-degeneracy, {\citep[Definition 2.7]{calafiore_2010_random}}] \label{non_degenerate_property}
	The sample set $\omega_N = \{\delta_1, \dots, \delta_N\} \subset \Delta$ is \emph{non-degenerate} if there exists a unique set $J \subset [N]$ for which $\beta = \{ \delta_j \}_{j \in J}$ is such that $(\lambda^*(\beta), c^*(\beta)) = (\lambda^*(\omega_N), c^*(\omega_N))$.
\end{defn}

Given the following assumption, we show that all samples sets considered in this work are non-degenerate.

\begin{ass} \label{ass_on_w_dis}
	All sets $S \subseteq B^n[W]$ such that $\mu^w(S) > 0$ have non-zero Lebesgue measure.
\end{ass}

\begin{prop} \label{non_degenerate_for_sure}
	Let $\omega_N$ be a sample set as described in \eqref{hybrid_observations}. Let Assumption~\ref{ass_on_w_dis} hold, then $\omega_N$ is non-degenerate with probability 1.
\end{prop}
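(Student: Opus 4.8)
The plan is to show that non-degeneracy fails only on an event of probability zero, by exploiting the absolute continuity assumption on the noise distribution $\mu^w$ (Assumption~\ref{ass_on_w_dis}). Recall from Definition~\ref{non_degenerate_property} that $\omega_N$ is non-degenerate when there is a \emph{unique} minimal support subset $J \subseteq [N]$ reproducing the optimal pair $(\lambda^*(\omega_N), c^*(\omega_N))$. The standard route, following the random convex programming framework of \cite{calafiore_2010_random}, is to argue that degeneracy can occur only when the data exhibits some ``accidental coincidence'' — typically when more active constraints than strictly necessary are tight at the optimum, creating ambiguity in the choice of the support set. I would therefore first characterize precisely what configuration of the samples $\{(x_i,e_i,w_i)\}$ forces a non-unique support set, and then show that this configuration lives in a set of $\pi_l$-measure zero.

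Concretely, I would proceed as follows. First, fix the discrete data, i.e.\ condition on the edges $e_1,\dots,e_N$ and the initial continuous states $x_1,\dots,x_N$ (these live on the finite set $\mathcal{E}(G^{(l)})$ and the sphere $\mathbb{S}^n$ respectively), so that the only remaining randomness is in the noises $w_1,\dots,w_N \in B^n[W]$. The constraint~\eqref{lyap_constraint} associated with the $i$-th sample is, for fixed $e_i,x_i$, an inequality of the form $(A_{\sigma^{(l)}(e_i)}x_i + w_i)^T P_{t(e_i)} (A_{\sigma^{(l)}(e_i)}x_i + w_i) \leq \lambda^{2l}\, x_i^T P_{s(e_i)} x_i$, which depends continuously (indeed analytically) on $w_i$. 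Next I would identify the degenerate event: degeneracy arises when, at the optimum, the set of active (tight) constraints is ``too large'' or geometrically coincident, so that a proper subset does not uniquely determine the optimizer. The key observation is that, by the optimality conditions of the lexicographic program~\eqref{scenario_program_1}, such a coincidence imposes an algebraic relation among the active $w_i$'s that is \emph{not} identically satisfied — it is a nontrivial polynomial (hence analytic) equality constraint on the $w_i$, cutting out a lower-dimensional algebraic subvariety of $(B^n[W])^N$.

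The crux is then a measure-theoretic argument: any proper algebraic subvariety of $(\mathbb{R}^n)^N$ has Lebesgue measure zero, and by Assumption~\ref{ass_on_w_dis} the measure $\mu^w$ (and hence the product $(\mu^w)^{\otimes N}$) assigns zero mass to any set of zero Lebesgue measure. Therefore the conditional probability of the degenerate event, given the discrete data and the $x_i$, is zero. Finally I would integrate over the distribution of the $x_i$ and $e_i$ — invoking Fubini/the tower property, since the overall measure is the product $\pi_l = \mu^x \times \mu^e_l \times \mu^w$ — to conclude that the unconditional probability of degeneracy is zero, i.e.\ $\omega_N$ is non-degenerate with probability $1$.

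I expect the main obstacle to be making rigorous the claim that the degenerate configurations form a measure-zero set: one must verify that the algebraic relation characterizing non-uniqueness of the support set is genuinely nontrivial (not vacuously satisfied by all $w_i$) and that it can be expressed via finitely many analytic equalities in the continuous noise variables. Here the role of the noise is essential — absent noise ($w_i = 0$), the active constraints are determined entirely by the discrete/spherical data and coincidences could occur with positive probability, whereas the perturbation $w_i$ generically ``breaks ties.'' I would handle this by enumerating the finitely many candidate support sets $J$ (there are at most $2^N$ of them, or fewer by dimension-counting arguments on the number of constraints needed to pin down the $P_u$ and $\lambda$), showing that for each pair of distinct candidate supports the event that both are optimal and minimal is a nontrivial analytic variety in the $w_i$, and then taking a finite union of measure-zero sets, which remains measure zero.
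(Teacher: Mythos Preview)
Your overall strategy---show that degeneracy is confined to a set of $\pi_l^N$-measure zero by reducing it to a nontrivial polynomial equality and invoking Assumption~\ref{ass_on_w_dis}---matches the paper's. However, the specific plan of conditioning on $(x_i,e_i)$ and then exhibiting ``an algebraic relation among the active $w_i$'s'' has a gap: once you condition only on the $x_i$ and $e_i$, the optimal pair $(\lambda^*,\{P_u^*\})$ still depends on \emph{all} the noises $w_1,\dots,w_N$, so the tightness condition for the $i$-th constraint is \emph{not} a polynomial in $w_i$ with fixed coefficients. One can try to recover a semi-algebraic description via Tarski--Seidenberg, but showing that the resulting set is lower-dimensional is exactly the nontriviality issue you flag as the main obstacle and do not resolve.

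The paper closes this gap by a different decomposition that exploits the \emph{independence} of the samples rather than conditioning on $(x_i,e_i)$. If $\omega_N$ is degenerate with tight index set $I$, then some $\delta_i$ can be dropped from $I$ without changing the optimizer; hence $(\lambda^*,\{P_u^*\})$ is determined by the remaining $\{\delta_j\}_{j\neq i}$ alone. Conditioning on those samples (not on $x_i,e_i$), the event ``$\delta_i$ is tight'' becomes the single-sample condition $p_e(x,w)=0$ for the fixed quadratic polynomial \eqref{quad_pol}. If this had positive $\pi_l$-probability, then by finiteness of $\mathcal{E}$ and the homogeneity $p_e(\alpha x,\alpha w)=\alpha^2 p_e(x,w)$ one extends from $\mathbb{S}^n\times B^n[W]$ to all of $\mathbb{R}^{2n}$ and invokes Lemma~\ref{zero_set_lemma} to conclude $p_e\equiv 0$. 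The nontriviality you were missing is then immediate: the block of $p_e$ quadratic in $w$ is $P_{t(e)}$, and $P_{t(e)}\succeq I$ by the constraint~\eqref{PgeqI}, so $P_{t(e)}\neq 0$---a contradiction. In short, the ingredient your sketch lacks is the independence-based reduction to a \emph{single} sample with \emph{fixed} coefficients, together with the concrete source of nontriviality coming from the constraint $P\succeq I$.
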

 A proof of Proposition~\ref{non_degenerate_for_sure} is provided in \ref{app}. For now, we consider that Assumption~\ref{ass_on_w_dis} holds, which is a reasonable assumption on the sampling, and therefore that the sample set is always almost-surely non-degenerate. Now, using \citep[Theorem~6]{berger_chance-constrained_2021}, we provide a chance-constrained result for the CJSR approximation problem in the presence of noise.


\begin{prop}
\label{PAC1}
	Consider an unknown CSLS $S(\mathcal{G}, \mathcal{A})$. For any given number of samples $N \in \mathbb{Z}_{>0}$ such that $N \geq d=  |\mathcal{U}|n(n+1)/2$, and for $l \in \mathbb{N}_{> 0}$, let $\omega_N \subset \Delta$ be the sample set as described in \eqref{hybrid_observations}. Given $\lambda^*(\omega_N)$ and $\left\{ P_u^*(\omega_N)\right\}_{u \in \mathcal{U}}$ the solutions of the sampled program~\eqref{scenario_program_1}, let 
	\begin{equation} \label{viol}
	V(\omega_N)
		= \left\{ (x, e, w) \in \Delta \, | \,
		\left\|A_{\sigma^{(l)}(e)} x
		\right\|_{P_{t\left(e\right)}^*(\omega_N)}
		>
		\overline{\lambda}_e(\omega_N)
		\left\|x\right\|_{P_{s\left(e\right)}^*(\omega_N)}
	\right\}, 
\end{equation}
where 
	\begin{equation} \label{lambda_with_noise}
		\overline{\lambda}_e(\omega_N) = \lambda^*(\omega_N)^l + \sqrt{\frac{\lambda^{\max}\left(P^*_{t(e)}(\omega_N)\right)}{\lambda^{\min}\left(P^*_{s(e)}(\omega_N)\right)}}W.
	\end{equation}
	Then, for any confidence level $\beta \in (0, 1)$, the following holds: 
	\begin{equation}
		\pi_l^N\left(
			\left\{
				\omega_N \subset \Delta \, | \, \pi_l\left(
					V(\omega_N)
				\right) \leq \varepsilon(\beta, N)
			\right\}
		\right) \geq 1 - \beta, 
	\end{equation} 
	with 
	\begin{equation} \label{vareps}
		\varepsilon(\beta, N) = \Phi^{-1}(1 - \beta; d-1, N).
	\end{equation} 
	where $\Phi(\cdot; d-1, N)$ is the \emph{regularized incomplete beta function}\footnote{See e.g. \citep[Equation~(6.6.2)]{crowder_2011_NIST}.}
\end{prop}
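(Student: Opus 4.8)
The plan is to reduce the statement to the chance-constrained guarantee for quasi-linear programs of \citep[Theorem~6]{berger_chance-constrained_2021}, and then to bridge the gap between the \emph{noisy} sampled constraint \eqref{lyap_constraint} and the \emph{noise-free} violation set \eqref{viol} by a purely deterministic inequality. First I would recast the sampled program \eqref{scenario_program_1} as a quasi-linear program in the sense of \citep{berger_chance-constrained_2021}. The key observation is that, after squaring, the constraint \eqref{lyap_constraint} reads $(A_{\sigma^{(l)}(e)}x + w)^\top P_{t(e)} (A_{\sigma^{(l)}(e)}x + w) \leq \lambda^{2l}\, x^\top P_{s(e)} x$, which for a fixed value of $\lambda$ is \emph{linear} in the entries of the matrices $\{P_u\}_{u \in \mathcal{U}}$; together with the lexicographic objective $(\lambda, c(P_1, \dots, P_{|\mathcal{U}|}))$ and the compact convex domain $\mathcal{X}$ of \eqref{PgeqI}, this is precisely the quasi-linear template the theorem requires. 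The decision variables are the symmetric matrices $\{P_u\}_{u \in \mathcal{U}}$, giving $d = |\mathcal{U}|\, n(n+1)/2$ free parameters, while Assumption~\ref{ass_on_w_dis} together with Proposition~\ref{non_degenerate_for_sure} discharges the non-degeneracy hypothesis almost surely. Invoking \citep[Theorem~6]{berger_chance-constrained_2021} then yields, with $\pi_l^N$-probability at least $1-\beta$,
\begin{equation*}
\pi_l\left(\widetilde{V}(\omega_N)\right) \leq \varepsilon(\beta, N), \qquad \widetilde{V}(\omega_N) = \left\{(x, e, w) \in \Delta \,\middle|\, \left\|A_{\sigma^{(l)}(e)}x + w\right\|_{P^*_{t(e)}(\omega_N)} > \lambda^*(\omega_N)^l \left\|x\right\|_{P^*_{s(e)}(\omega_N)}\right\},
\end{equation*}
where $\widetilde{V}(\omega_N)$ is the set of fresh samples violating the learned constraint \emph{with} noise.

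It then remains to establish the deterministic inclusion $V(\omega_N) \subseteq \widetilde{V}(\omega_N)$, from which $\pi_l(V(\omega_N)) \le \pi_l(\widetilde{V}(\omega_N)) \le \varepsilon(\beta,N)$ follows at once. To this end I would take $(x,e,w) \in V(\omega_N)$ and, writing $P_t^* = P^*_{t(e)}(\omega_N)$ and $P_s^* = P^*_{s(e)}(\omega_N)$, apply the triangle inequality for $\|\cdot\|_{P_t^*}$ to get $\|A_{\sigma^{(l)}(e)}x + w\|_{P_t^*} \geq \|A_{\sigma^{(l)}(e)}x\|_{P_t^*} - \|w\|_{P_t^*}$. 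I would then bound $\|w\|_{P_t^*} \leq \sqrt{\lambda^{\max}(P_t^*)}\,\|w\| \leq \sqrt{\lambda^{\max}(P_t^*)}\,W$ using $w \in B^n[W]$, and crucially use $x \in \mathbb{S}^n$ to obtain $\|x\|_{P_s^*} \geq \sqrt{\lambda^{\min}(P_s^*)}$. Combining these with the definition of $\overline{\lambda}_e(\omega_N)$ in \eqref{lambda_with_noise} shows $\sqrt{\lambda^{\max}(P_t^*)}\,W \le \sqrt{\lambda^{\max}(P_t^*)/\lambda^{\min}(P_s^*)}\,W\,\|x\|_{P_s^*}$, so the violation inequality $\|A_{\sigma^{(l)}(e)}x\|_{P_t^*} > \overline{\lambda}_e(\omega_N)\,\|x\|_{P_s^*}$ defining $V(\omega_N)$ propagates to $\|A_{\sigma^{(l)}(e)}x + w\|_{P_t^*} > \lambda^*(\omega_N)^l\,\|x\|_{P_s^*}$, i.e. $(x,e,w) \in \widetilde{V}(\omega_N)$, for every admissible noise $w$.

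I expect the main obstacle to lie in the first step: carefully verifying that the sampled program fits the quasi-linear framework of \citep{berger_chance-constrained_2021} and that the dimension count $d = |\mathcal{U}|\,n(n+1)/2$ (entering as $d-1$ in \eqref{vareps}) is the one the theorem's regularized-incomplete-beta bound demands, given that the scalar $\lambda$ is optimized lexicographically rather than counted among the feasibility variables. The non-degeneracy hypothesis is essential here, which is exactly why Proposition~\ref{non_degenerate_for_sure} is invoked. By contrast, the second step is an elementary norm manipulation; the only point requiring care is that the restriction $x \in \mathbb{S}^n$ is precisely what makes the additional term in $\overline{\lambda}_e(\omega_N)$ dominate the perturbation $\|w\|_{P_t^*}$, which is why sampling the continuous state on the unit sphere (rather than on $B^n[W]$) is what allows the noise-free violation set to be controlled.
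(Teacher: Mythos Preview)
Your proposal is correct and follows essentially the same route as the paper: first invoke \citep[Theorem~6]{berger_chance-constrained_2021} on the sampled program \eqref{scenario_program_1}, viewed as a quasi-linear program with $d=|\mathcal{U}|n(n+1)/2$ decision variables and almost-sure non-degeneracy from Proposition~\ref{non_degenerate_for_sure}, to bound the measure of the noisy violation set $\widetilde{V}(\omega_N)$; then prove the deterministic inclusion $V(\omega_N)\subseteq\widetilde{V}(\omega_N)$ via the triangle inequality together with the eigenvalue bounds $\sqrt{\lambda^{\min}(P)}\|v\|\le\|v\|_P\le\sqrt{\lambda^{\max}(P)}\|v\|$ and the fact that $x\in\mathbb{S}^n$. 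The paper's computation in step~(ii) is organized slightly differently (it adds and subtracts $w$ inside $\|A_e x\|_{P_{t(e)}}$ rather than using the reverse triangle inequality on $\|A_e x+w\|_{P_{t(e)}}$), but the two manipulations are equivalent.
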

\begin{proof}
	For the sake of readability, let $\lambda = \lambda^*(\omega_N)$, and, for $e \in \mathcal{E}$, let $A_e = A_{\sigma^{(l)}(e)}$, $\overline{\lambda}_e = \overline{\lambda}_e(\omega_N)$, $P_{t(e)} = P_{t(e)}^*(\omega_N)$ and $P_{s(e)} = P_{s(e)}^*(\omega_N)$.
	\begin{enumerate}[(i)]
		\item \label{prop_proof_item1} 
		One can see that program~\eqref{scenario_program_1} is the \emph{sampled form} of a \emph{quasi-linear program}, as defined in \citep{berger_chance-constrained_2021}. Indeed, $c$ is strongly convex, the admissible set for the variables is $\mathcal{X}^{|\mathcal{U}|}$, which is compact, and $\omega_N$ is a finite subset of $\Delta$. Now, the set
		\begin{equation}
		\tilde{V}(\omega_N) = \left\{ (x, e, w) \in \Delta \, | \,
		\left\|A_e x + w
		\right\|_{P_{t\left(e\right)}}
		>
		\lambda^{l}
		\left\|x\right\|_{P_{s\left(e\right)}}
		\right\}
		\end{equation}
		is the set of violated constraints in the quasi-linear program. Following \citep[Theorem~6]{berger_chance-constrained_2021}, since $N \geq d$, and that $\omega_N$ is almost surely non-degenerate following Proposition~\ref{non_degenerate_for_sure}, then 
		\begin{equation}
		\pi_l^N\left(
			\left\{
				\omega_N \subset \Delta \, | \, \pi_l\left(
					\tilde{V}(\omega_N)
				\right) \leq \varepsilon(\beta, N)
			\right\}
		\right) \geq 1 - \beta
	\end{equation} 
		holds.
	
		\item \label{prop_proof_item2}
		Second, we prove that $V(\omega_N) \subseteq \tilde{V}(\omega_N)$. 
		Let $(x, e, w) \in V(\omega_N)$, then	
		\begin{equation}
			\overline{\lambda}_e \| x \|_{P_{s(e)}} < \| A_e x\|_{P_{t(e)}}.
	\end{equation}
	By subadditivity of the norm, it holds that 
	\begin{align}
		\overline{\lambda}_e \|x\|_{P_{s(e)}} &< \| A_e x + w - w\|_{P_{t(e)}} \\
		\implies \overline{\lambda}_e \|x\|_{P_{s(e)}} &< \|A_e x + w\|_{P_{t(e)}} + \|w\|_{P_{t(e)}}.
	\end{align}
	Now, it is a well known fact that, for any vector $v \in \mathbb{R}^n$ and matrix $P \in \mathcal{S}^n$, $\sqrt{\lambda^{\min}(P)}\|v\| \leq \|v\|_P \leq \sqrt{\lambda^{\max}(P)}\|v\|$. Therefore, 
	\begin{equation}
	\begin{aligned}
		\lambda^l \|x\|_{P_{s(e)}} + & W \sqrt{\frac{\lambda^{\max}\left(P_{t(e)}\right)}{\lambda^{\min}\left(P_{s(e)}\right)}}  \sqrt{\lambda^{\min}\left(P_{s(e)}\right)} \\
		&< \|A_e x + w\|_{P_{t(e)}} + \sqrt{\lambda^{\max}\left( P_{t(e)} \right)}\|w\| \\
		&< \|A_e x + w\|_{P_{t(e)}} + \sqrt{\lambda^{\max}\left( P_{t(e)} \right)}W, 
	\end{aligned}
	\end{equation}
	which implies that
	\begin{equation}
		\lambda^l \|x\|_{P_{s(e)}} < \|A_e x + w\|_{P_{t(e)}}.
	\end{equation}
	\end{enumerate} 
	
	Now, \ref{prop_proof_item2} implies that $\pi_l(V(\omega_N)) \leq \pi_l(\tilde{V}(\omega_N))$. Therefore, for any $\varepsilon \in (0, 1)$, $\pi_l(\tilde{V}(\omega_N)) \leq \varepsilon$ implies that $\pi_l(V(\omega_N)) \leq \varepsilon$, and, following \ref{prop_proof_item1}, the proof is completed.
\end{proof}

From there, using geometrical properties of ellipsoids, one can use the measure of violated constraints of the program~\eqref{scenario_program_1} to derive guarantees on the model-based optimal objective value, that is $\rho(\mathcal{G}, \mathcal{A})$.

\begin{thm} \label{maintheorem1}
	Consider an unknown CSLS $S(\mathcal{G}, \mathcal{A})$. For any given number of observations $N \in \mathbb{Z}_{>0}$ such that $N \geq d=  |\mathcal{U}|n(n+1)/2$, and for $l \in \mathbb{N}_{> 0}$, let $\omega_N \subset \Delta$ be the sampled set as described in \eqref{hybrid_observations}. Given $\lambda^*(\omega_N)$ and $\{P_u^*(\omega_N)\}_{u \in \mathcal{U}}$ the solutions of the sampled program~\eqref{scenario_program_1}, for any confidence level $\beta \in (0, 1)$, the following holds:
	\begin{equation}
		\pi_l^N\left(
			\left\{
				\omega_N \subset \Delta \, 
				| \, \rho(\mathcal{G}, \mathcal{A}) \leq \overline{\rho}(\omega_N)
			\right\}
		\right) \geq 1 - \beta, 
	\end{equation} 
	where
	\begin{equation} \label{maintheorem1-bound}
		\overline{\rho}(\omega_N) = \max_{
		e \in \{e' \,|\, (x, e', w) \in \omega_N \}
		} \left( 
			\frac{\overline{\lambda}_e(\omega_N)}{\delta\left( \varepsilon(\beta, N) |\mathcal{E}(G^{(l)})| \kappa(P^*_{s(e)}(\omega_N)) / 2 \right)}
		\right)^{1/l}, 
	\end{equation}
	with $\varepsilon(\beta, N)$ and $\overline{\lambda}_e(\omega_N)$ are respectively defined in \eqref{vareps} and \eqref{lambda_with_noise}, 
	\begin{equation} \label{kappa}
		\kappa(P) = \sqrt{\det(P) / \lambda^{\min}(P)^n}, 
	\end{equation} 
	and, for any $x \in (0, 1)$, $\delta(x)$ is defined as  
	\begin{equation} \label{delta}
		\delta(x) = \sqrt{1 - \Phi^{-1}(2x; (n - 1) / 2, 1/2)}.
	\end{equation}
\end{thm}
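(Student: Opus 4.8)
\section*{Proof proposal for Theorem~\ref{maintheorem1}}

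The plan is to turn the chance-constrained bound of Proposition~\ref{PAC1} on the \emph{measure} of violated constraints into a \emph{deterministic} bound on the worst-case contraction ratio of the candidate Lyapunov function $V(u,x)=\|x\|_{P_u^*(\omega_N)}$, and then to invoke the lifting inequality \eqref{l-lifting-eq}. I work throughout on the event $\{\omega_N : \pi_l(V(\omega_N))\le\varepsilon(\beta,N)\}$, which by Proposition~\ref{PAC1} has probability at least $1-\beta$; it therefore suffices to prove $\rho(\mathcal{G},\mathcal{A})\le\overline{\rho}(\omega_N)$ \emph{deterministically} on this event. The first, easy, step is to localize the measure bound to a single edge. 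Because $w$ does not enter the inequality defining $V(\omega_N)$ in \eqref{viol} and $\mu^e_l$ is uniform on $\mathcal{E}(G^{(l)})$, the product measure factorizes as $\pi_l(V(\omega_N))=\tfrac{1}{|\mathcal{E}(G^{(l)})|}\sum_{e}\mu^x(X_e)$, where $X_e=\{x\in\mathbb{S}^n : \|A_{\sigma^{(l)}(e)}x\|_{P^*_{t(e)}(\omega_N)}>\overline{\lambda}_e(\omega_N)\,\|x\|_{P^*_{s(e)}(\omega_N)}\}$. Since each summand is non-negative, the event yields, for \emph{every} edge $e$ of the lift simultaneously, $\mu^x(X_e)\le|\mathcal{E}(G^{(l)})|\,\varepsilon(\beta,N)$.

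The core of the argument, which I expect to be the main obstacle, is a purely geometric lemma bounding $\Lambda_e:=\max_{x\neq0}\|A_{\sigma^{(l)}(e)}x\|_{P^*_{t(e)}}^2/\|x\|_{P^*_{s(e)}}^2$ in terms of the smallness of $\mu^x(X_e)$. I would first reduce the source ellipsoid to a sphere through the substitution $z=(P^*_{s(e)})^{1/2}x$, under which $X_e$ becomes, after radial normalization, the superlevel set $\{z\in\mathbb{S}^n : z^\top M z>\overline{\lambda}_e^2\}$ of the positive semidefinite quadratic form $M=(P^*_{s(e)})^{-1/2}A_{\sigma^{(l)}(e)}^\top P^*_{t(e)}A_{\sigma^{(l)}(e)}(P^*_{s(e)})^{-1/2}$, whose largest eigenvalue is exactly $\Lambda_e$. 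Two estimates then combine. A volume-distortion estimate for the radial map $x\mapsto(P^*_{s(e)})^{1/2}x/\|(P^*_{s(e)})^{1/2}x\|$ shows that its push-forward of the uniform measure has density at most $\kappa(P^*_{s(e)})$, defined in \eqref{kappa}, with respect to the uniform measure, so the image superlevel set has uniform measure at most $m:=\kappa(P^*_{s(e)})\,|\mathcal{E}(G^{(l)})|\,\varepsilon(\beta,N)$. Next, writing $v$ for a unit top eigenvector of $M$ and $z=\cos\theta\,v+\sin\theta\,u$ with $u\perp v$, $\|u\|=1$, positive semidefiniteness gives $z^\top M z\ge\Lambda_e\langle z,v\rangle^2$, so the superlevel set contains the two antipodal caps $\{z : |\langle z,v\rangle|>\overline{\lambda}_e/\sqrt{\Lambda_e}\}$.

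Since the normalized spherical measure of a cap $\{z\in\mathbb{S}^n:\langle z,v\rangle\ge t\}$ equals $\tfrac12\Phi(1-t^2;(n-1)/2,1/2)$, the map $\delta$ in \eqref{delta} is precisely the height of a single cap of a prescribed measure. Each of the two antipodal caps therefore has measure at most $m/2$, which forces $\overline{\lambda}_e/\sqrt{\Lambda_e}\ge\delta(m/2)$, i.e.
\begin{equation}
	\sqrt{\Lambda_e}\;\le\;\frac{\overline{\lambda}_e(\omega_N)}{\delta\!\left(\varepsilon(\beta,N)\,|\mathcal{E}(G^{(l)})|\,\kappa(P^*_{s(e)}(\omega_N))/2\right)}.
\end{equation}
(If $\overline{\lambda}_e\ge\sqrt{\Lambda_e}$ the inequality is trivial, since $\delta\le1$, so no case is lost; the bound is informative precisely when $\varepsilon(\beta,N)\,|\mathcal{E}(G^{(l)})|\,\kappa(P^*_{s(e)})<1$, which holds once $N$ is large enough for $\delta$ to be defined.) The two delicate points here are pinning down the exact constant $\kappa(P^*_{s(e)})$ in the volume-distortion estimate and the factor $1/2$ arising from the antipodal symmetry of the quadratic form $z^\top Mz$.

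Finally I would assemble these per-edge bounds. The displayed inequality states exactly that $V(u,x)=\|x\|_{P^*_u(\omega_N)}$ satisfies $\|A_{\sigma^{(l)}(e)}x\|_{P^*_{t(e)}}\le(\overline{\lambda}_e/\delta(\cdots))\|x\|_{P^*_{s(e)}}$ for all $x$ and all edges $e$; as the bound of the previous step is valid for every edge of the lift (the measure estimate being uniform in $e$), $V$ is a quadratic Lyapunov function for $S(\mathcal{G}^{(l)},\mathcal{A}^{(l)})$ with $\gamma(V)\le\max_{e}\overline{\lambda}_e/\delta(\cdots)$, the maximized quantities depending on $e$ only through its endpoints $(s(e),t(e))$. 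By the variational characterization \eqref{quadraticupperboundCJSR} and the lifting inequality \eqref{l-lifting-eq}, $\rho(\mathcal{G},\mathcal{A})\le\gamma^*(\mathcal{G}^{(l)},\mathcal{A}^{(l)})^{1/l}\le\gamma(V)^{1/l}$; pushing the $1/l$-th power inside the maximum yields exactly the expression $\overline{\rho}(\omega_N)$ of \eqref{maintheorem1-bound}. This establishes $\rho(\mathcal{G},\mathcal{A})\le\overline{\rho}(\omega_N)$ on the event of probability at least $1-\beta$, which completes the proof.
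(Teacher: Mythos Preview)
Your proof is correct and shares the paper's overall scaffolding (Proposition~\ref{PAC1} $\Rightarrow$ per-edge measure bound $\Rightarrow$ change of variables with the $\kappa$-distortion factor $\Rightarrow$ spherical-cap/$\delta$ estimate $\Rightarrow$ quadratic Lyapunov certificate $\Rightarrow$ lifting inequality~\eqref{l-lifting-eq}). The one place where you genuinely diverge is the core geometric step. The paper, following \cite{kenanian_data_2019}, argues indirectly: it shows $B_e(\mathbb{S}^n\setminus\Omega'_e)\subset\overline{\lambda}_e\mathbb{B}^n$, passes to the convex hull, and then invokes the fact that the largest ball inscribed in $\mathrm{conv}(\mathbb{S}^n\setminus\Omega'_e)$ has radius at least $\delta(\mu^x(\Omega'_e)/2)$ (the worst case for this inscribed-ball problem being precisely two antipodal caps). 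You instead exploit directly that after the change of variables the violation set is the superlevel set of a PSD quadratic form, and hence contains the two antipodal caps about the top eigenvector; bounding their measure immediately controls $\Lambda_e$. Your route is more elementary and avoids the convex-hull detour, at the cost of being specific to quadratic constraints; the paper's argument is the one that generalizes to arbitrary (non-quadratic) norm templates, which is why the inscribed-sphere lemma is packaged as a black box in \cite{kenanian_data_2019}. Both routes land on exactly the same bound.

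One small wording slip: saying the push-forward of $\mu^x$ under the radial map $x\mapsto P^{1/2}x/\|P^{1/2}x\|$ has density at most $\kappa(P)$ gives the inequality in the wrong direction. What you need (and what the paper's step~(iii), citing \cite[Theorem~15]{kenanian_data_2019}, proves) is that the \emph{Jacobian} of this map is bounded above by $\kappa(P)$, equivalently that the push-forward of $\mu^x$ under the \emph{inverse} map has density at most $\kappa(P)$; this yields $\mu^x(\text{image of }X_e)\le\kappa(P_{s(e)})\,\mu^x(X_e)$ as you use it. Also, your final maximum is taken over all edges of the lift, whereas the theorem as stated maximizes over sampled edges only; since $\overline{\lambda}_e$ and $\kappa(P^*_{s(e)})$ depend on $e$ only through $(s(e),t(e))$, this is harmless once every endpoint pair is witnessed in the sample, but strictly speaking the paper's own proof has the same implicit step.
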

\begin{proof}
	This theorem is a generalization of \citep[Corollary~11, Theorem~15]{kenanian_data_2019} to the constrained case, and with noisy data. For the sake of readibility, for any $e \in \mathcal{E}$, we write $A_e = A_{\sigma^{(l)}(e)}$, $P_{s(e)} = P^*_{s(e)}(\omega_N)$, $P_{t(e)} =  P^*_{t(e)}(\omega_N)$ and $\overline{\lambda}_e(\omega_N) = \overline{\lambda}_e$. 
	\begin{enumerate}[(i)]
		\item \label{item0} First, we show that, with probability at least $1-\beta$, there exists a set $\Omega \subset \mathbb{S}^n$ such that 
		\begin{equation} \label{eqwithomegaquad}
			\forall e \in \mathcal{E}, \forall x \in \mathbb{S}^n \setminus \Omega : 
		x^TA_eP_{t(e)}A_ex \leq \overline{\lambda}_e^2 x^TP_{s(e)}x, 
		\end{equation}
		and that satisfies
		\begin{equation}  \label{measure_omega}
			\mu^x(\Omega) \leq \varepsilon(\beta, N) |\mathcal{E}|.
		\end{equation}
		
		Condition \eqref{eqwithomegaquad} implies that
		\begin{equation} 
			\Omega = \{
				x \in \mathbb{S}^n \,|\, \exists e \in \mathcal{E} :
				\| A_e x \|_{P_{t(e)}}
				>
				\overline{\lambda}_e
				\|x\|_{P_{s(e)}}
				\}.
		\end{equation}
		Therefore, since $\mu_l^e$ is uniform on $\mathcal{E}$, $\mu^x(\Omega) \leq \pi_l(V(\omega_N))|\mathcal{E}|$. Finally, following Proposition~\ref{PAC1}, \eqref{measure_omega} holds with probability at least $1 - \beta$.
		
		\item \label{item1}
		Now, for all $e \in \mathcal{E}$, we perform a change of variable.
		For any $e \in \mathcal{E}$, consider the \emph{Cholesky} factorization of $P_{s(e)}$ and $P_{t(e)}$, that is $P_{s(e)} = L_{s(e)}^TL_{s(e)}$ and $P_{t(e)} = L_{t(e)}^TL_{t(e)}$. Now, for all $e \in \mathcal{E}$, let $B_e = L_{t(e)} A_e L_{s(e)}^{-1}$. Equation~\eqref{eqwithomegaquad} becomes
		\begin{equation}
		\begin{aligned} 
			&\forall e \in \mathcal{E}, \forall x \in L_{s(e)}(\mathbb{S}^n \setminus \Omega) : 
			x^TB_e^TB_ex \leq \overline{\lambda}_e^2 x^Tx \\
			\iff &\forall e \in \mathcal{E}, \forall x \in \Pi_{\mathbb{S}^n}(L_{s(e)}(\mathbb{S}^n \setminus \Omega)) : 
			x^TB_e^TB_ex \leq \overline{\lambda}_e^2 x^Tx \\
			\iff & \forall e \in \mathcal{E}, \forall x \in \mathbb{S}^n \setminus \Pi_{\mathbb{S}^n}(L_{s(e)}(\Omega)) : 
			x^TB_e^TB_ex \leq \overline{\lambda}_e^2 x^Tx,
		\end{aligned}
		\end{equation}
		where the second equivalence holds by homogeneity of the system. For the sake of readibility, we write $\Omega'_e = \Pi_{\mathbb{S}^n}(L_{s(e)}(\Omega))$.
		\item \label{item2}
		We now link $\mu^x(\Omega)$ and $\mu^x(\Omega'_e)$. Using a similar reasoning as in \citep[Theorem~15]{kenanian_data_2019}, for all $e \in \mathcal{E}$, the following holds: 
		\begin{equation} \label{omega_prime_upper}
			\mu^x(\Omega'_e) \leq \sqrt{\frac{\det(P_{s(e)})}{\lambda^{\min}(P_{s(e)})^n}}\mu^x(\Omega) = \kappa(P_{s(e)})\mu^x(\Omega).
		\end{equation}
		This yields 
		\begin{equation} \label{eqwithomegaprime}
			\forall e \in \mathcal{E}, \forall x \in \mathbb{S}^n \setminus \Omega'_e : 
			x^TB_e^TB_ex \leq \overline{\lambda}_e^2 x^Tx, 
		\end{equation}
		where, for all $e \in \mathcal{E}$, $\mu^x(\Omega'_e) \leq \kappa(P_{s(e)}) \mu^x(\Omega)$.
		\item \label{item3}
		We now look for the largest sphere included in $\text{conv}(\mathbb{S}^n \setminus \Omega'_{e})$. The radius of this sphere is denoted by $\alpha$. First, \eqref{eqwithomegaprime} is equivalent to
		\begin{equation} \label{illustration}
			\forall e \in \mathcal{E} : B_e(\mathbb{S}^n \setminus \Omega'_e) \subset \overline{\lambda}_e \mathbb{B}^n. 
		\end{equation}
		Now, \citep[Property~2]{kenanian_data_2019} implies that 
		\begin{equation}
			\forall e \in \mathcal{E} : B_e(\text{conv}(\mathbb{S}^n \setminus \Omega'_e)) \subset \text{conv}(B_e(\mathbb{S}^n \setminus \Omega'_e)) \subset \overline{\lambda}_e \mathbb{B}^n.
		\end{equation}
		Now, from the definition of $\alpha$, for all $e \in \mathcal{E}$, we have that $\alpha\mathbb{S}^n \subset \text{conv}(\mathbb{S}^n \setminus \Omega'_e)$, which implies that $B_e(\alpha \mathbb{S}^n) \subset B_e(\text{conv}(\mathbb{S}^n \setminus \Omega'_e)) \subset \overline{\lambda}_e\mathbb{B}^n$. This yields 
		\begin{equation} 
		\forall e \in \mathcal{E} : B_e(\mathbb{S}^n) \subset (\overline{\lambda}_e / \alpha) \mathbb{B}^n.
		\end{equation}
		Finally, following \citep[Proposition~13]{kenanian_data_2019} and step \ref{item2},
		\begin{equation}
			\alpha \geq \delta(\mu^x(\Omega'_e) / 2 ) \geq \delta( \kappa(P_{s(e)}) \mu^x(\Omega) / 2).
		\end{equation}
		\end{enumerate}
		An illustration of Equation~\eqref{illustration}, as well as the quantity $\alpha$ can be found in Figure~\ref{illustration_proof}.
		\begin{figure}[h!]
			\centering
			\tikzset{every picture/.style={line width=0.75pt}} 

\begin{tikzpicture}[x=0.75pt,y=0.75pt,yscale=-1,xscale=1]

\draw  [color={rgb, 255:red, 245; green, 166; blue, 35 }  ,draw opacity=1 ][fill={rgb, 255:red, 245; green, 166; blue, 35 }  ,fill opacity=1 ] (156.17,124.63) .. controls (156.17,69.13) and (178.48,24.13) .. (206,24.13) .. controls (233.52,24.13) and (255.83,69.13) .. (255.83,124.63) .. controls (255.83,180.14) and (233.52,225.13) .. (206,225.13) .. controls (178.48,225.13) and (156.17,180.14) .. (156.17,124.63) -- cycle ;
\draw  [color={rgb, 255:red, 245; green, 166; blue, 35 }  ,draw opacity=1 ][fill={rgb, 255:red, 255; green, 255; blue, 255 }  ,fill opacity=1 ][dash pattern={on 4.5pt off 4.5pt}] (115.8,125) .. controls (115.8,75.04) and (156.3,34.53) .. (206.27,34.53) .. controls (256.23,34.53) and (296.73,75.04) .. (296.73,125) .. controls (296.73,174.96) and (256.23,215.47) .. (206.27,215.47) .. controls (156.3,215.47) and (115.8,174.96) .. (115.8,125) -- cycle ;
\draw  [color={rgb, 255:red, 208; green, 2; blue, 27 }  ,draw opacity=1 ][fill={rgb, 255:red, 208; green, 2; blue, 27 }  ,fill opacity=1 ] (166,124.63) .. controls (166,80.36) and (183.91,44.47) .. (206,44.47) .. controls (228.09,44.47) and (246,80.36) .. (246,124.63) .. controls (246,168.91) and (228.09,204.8) .. (206,204.8) .. controls (183.91,204.8) and (166,168.91) .. (166,124.63) -- cycle ;
\draw  [color={rgb, 255:red, 208; green, 2; blue, 27 }  ,draw opacity=1 ][fill={rgb, 255:red, 255; green, 255; blue, 255 }  ,fill opacity=1 ][dash pattern={on 4.5pt off 4.5pt}] (134.53,125) .. controls (134.53,85.38) and (166.65,53.27) .. (206.27,53.27) .. controls (245.88,53.27) and (278,85.38) .. (278,125) .. controls (278,164.62) and (245.88,196.73) .. (206.27,196.73) .. controls (166.65,196.73) and (134.53,164.62) .. (134.53,125) -- cycle ;
\draw   (169.6,125) .. controls (169.6,104.75) and (186.02,88.33) .. (206.27,88.33) .. controls (226.52,88.33) and (242.93,104.75) .. (242.93,125) .. controls (242.93,145.25) and (226.52,161.67) .. (206.27,161.67) .. controls (186.02,161.67) and (169.6,145.25) .. (169.6,125) -- cycle ;
\draw  [color={rgb, 255:red, 208; green, 2; blue, 27 }  ,draw opacity=1 ] (166,124.63) .. controls (166,80.36) and (183.91,44.47) .. (206,44.47) .. controls (228.09,44.47) and (246,80.36) .. (246,124.63) .. controls (246,168.91) and (228.09,204.8) .. (206,204.8) .. controls (183.91,204.8) and (166,168.91) .. (166,124.63) -- cycle ;
\draw  [draw opacity=0] (182.7,96.91) .. controls (189.07,91.56) and (197.29,88.33) .. (206.27,88.33) .. controls (215.24,88.33) and (223.46,91.56) .. (229.84,96.91) -- (206.27,125) -- cycle ; \draw  [color={rgb, 255:red, 74; green, 144; blue, 226 }  ,draw opacity=1 ] (182.7,96.91) .. controls (189.07,91.56) and (197.29,88.33) .. (206.27,88.33) .. controls (215.24,88.33) and (223.46,91.56) .. (229.84,96.91) ;  
\draw  [draw opacity=0] (229.84,153.09) .. controls (223.46,158.44) and (215.24,161.67) .. (206.27,161.67) .. controls (197.29,161.67) and (189.07,158.44) .. (182.7,153.09) -- (206.27,125) -- cycle ; \draw  [color={rgb, 255:red, 74; green, 144; blue, 226 }  ,draw opacity=1 ] (229.84,153.09) .. controls (223.46,158.44) and (215.24,161.67) .. (206.27,161.67) .. controls (197.29,161.67) and (189.07,158.44) .. (182.7,153.09) ;  
\draw [color={rgb, 255:red, 208; green, 2; blue, 27 }  ,draw opacity=1 ]   (337.37,91.66) .. controls (376.57,62.26) and (396.05,46.15) .. (436.97,65.28) ;
\draw [shift={(439.5,66.5)}, rotate = 206.21] [fill={rgb, 255:red, 208; green, 2; blue, 27 }  ,fill opacity=1 ][line width=0.08]  [draw opacity=0] (8.93,-4.29) -- (0,0) -- (8.93,4.29) -- cycle    ;
\draw [color={rgb, 255:red, 245; green, 166; blue, 35 }  ,draw opacity=1 ]   (341.11,141.92) .. controls (381.04,170.3) and (420.7,181.44) .. (485.53,159.67) ;
\draw [shift={(487.5,159)}, rotate = 161.05] [fill={rgb, 255:red, 245; green, 166; blue, 35 }  ,fill opacity=1 ][line width=0.08]  [draw opacity=0] (8.93,-4.29) -- (0,0) -- (8.93,4.29) -- cycle    ;
\draw  [color={rgb, 255:red, 245; green, 166; blue, 35 }  ,draw opacity=1 ] (156.17,124.63) .. controls (156.17,69.13) and (178.48,24.13) .. (206,24.13) .. controls (233.52,24.13) and (255.83,69.13) .. (255.83,124.63) .. controls (255.83,180.14) and (233.52,225.13) .. (206,225.13) .. controls (178.48,225.13) and (156.17,180.14) .. (156.17,124.63) -- cycle ;
\draw [color={rgb, 255:red, 74; green, 144; blue, 226 }  ,draw opacity=1 ]   (182.7,96.91) -- (229.84,96.91) ;
\draw [color={rgb, 255:red, 74; green, 144; blue, 226 }  ,draw opacity=1 ]   (182.7,153.09) -- (229.84,153.09) ;
\draw  [color={rgb, 255:red, 0; green, 0; blue, 0 }  ,draw opacity=0.19 ] (179.18,125) .. controls (179.18,110.04) and (191.31,97.92) .. (206.27,97.92) .. controls (221.22,97.92) and (233.35,110.04) .. (233.35,125) .. controls (233.35,139.96) and (221.22,152.08) .. (206.27,152.08) .. controls (191.31,152.08) and (179.18,139.96) .. (179.18,125) -- cycle ;
\draw   (315.5,116.9) .. controls (315.5,106.13) and (324.23,97.4) .. (335,97.4) .. controls (345.77,97.4) and (354.5,106.13) .. (354.5,116.9) .. controls (354.5,127.67) and (345.77,136.4) .. (335,136.4) .. controls (324.23,136.4) and (315.5,127.67) .. (315.5,116.9) -- cycle ;
\draw [color={rgb, 255:red, 0; green, 0; blue, 0 }  ,draw opacity=0.19 ]   (206,124.63) -- (229.8,111) ;
\draw [color={rgb, 255:red, 208; green, 2; blue, 27 }  ,draw opacity=1 ]   (249.6,67.8) -- (270.6,34.2) ;
\draw [color={rgb, 255:red, 245; green, 166; blue, 35 }  ,draw opacity=1 ]   (264.8,194.2) -- (298.6,194.2) ;
\draw [color={rgb, 255:red, 208; green, 2; blue, 27 }  ,draw opacity=1 ]   (175.77,72.53) -- (120.6,52.6) ;
\draw [color={rgb, 255:red, 245; green, 166; blue, 35 }  ,draw opacity=1 ]   (97,167.4) -- (161.4,167.4) ;
\draw   (443,74.4) .. controls (443,63.63) and (451.73,54.9) .. (462.5,54.9) .. controls (473.27,54.9) and (482,63.63) .. (482,74.4) .. controls (482,85.17) and (473.27,93.9) .. (462.5,93.9) .. controls (451.73,93.9) and (443,85.17) .. (443,74.4) -- cycle ;
\draw   (490,153.9) .. controls (490,143.13) and (498.73,134.4) .. (509.5,134.4) .. controls (520.27,134.4) and (529,143.13) .. (529,153.9) .. controls (529,164.67) and (520.27,173.4) .. (509.5,173.4) .. controls (498.73,173.4) and (490,164.67) .. (490,153.9) -- cycle ;

\draw (178.4,165.6) node [anchor=north west][inner sep=0.75pt]  [font=\scriptsize,color={rgb, 255:red, 74; green, 144; blue, 226 }  ,opacity=1 ]  {$\Omega '_{e_{1}} =\Omega '_{e_{2}}$};
\draw (321.6,102.8) node [anchor=north west][inner sep=0.75pt]  [font=\scriptsize,color={rgb, 255:red, 255; green, 255; blue, 255 }  ,opacity=1 ]  {$\textcolor[rgb]{0.82,0.01,0.11}{s( e_{1})}$};
\draw (321.6,117.8) node [anchor=north west][inner sep=0.75pt]  [font=\scriptsize,color={rgb, 255:red, 255; green, 255; blue, 255 }  ,opacity=1 ]  {$\textcolor[rgb]{0.96,0.65,0.14}{s( e_{2})}$};
\draw (449.9,67.7) node [anchor=north west][inner sep=0.75pt]  [font=\scriptsize,color={rgb, 255:red, 255; green, 255; blue, 255 }  ,opacity=1 ]  {$\textcolor[rgb]{0.82,0.01,0.11}{t( e_{1})}$};
\draw (497,147.5) node [anchor=north west][inner sep=0.75pt]  [font=\scriptsize,color={rgb, 255:red, 255; green, 255; blue, 255 }  ,opacity=1 ]  {$\textcolor[rgb]{0.96,0.65,0.14}{t( e_{2})}$};
\draw (217.5,116.7) node [anchor=north west][inner sep=0.75pt]  [font=\scriptsize, opacity=0.19 ]  {$\alpha $};
\draw (220.1,77.6) node [anchor=north west][inner sep=0.75pt]  [font=\scriptsize  ,opacity=1 ]  {$\mathbb{S}^{n}$};
\draw (271.8,18.7) node [anchor=north west][inner sep=0.75pt]  [font=\scriptsize]  {$\textcolor[rgb]{0.82,0.01,0.11}{\overline{\lambda}_{e_1}\mathbb{B}^{n}}$};
\draw (303,186.4) node [anchor=north west][inner sep=0.75pt]  [font=\scriptsize,color={rgb, 255:red, 245; green, 166; blue, 35 }  ,opacity=1 ]  {$\textcolor[rgb]{0.96,0.65,0.14}{\overline{\lambda }_{e_2}\mathbb{B}^{n}}$};
\draw (93.6,40.7) node [anchor=north west][inner sep=0.75pt]  [font=\scriptsize]  {$\textcolor[rgb]{0.82,0.01,0.11}{B_{e_{1}}\mathbb{S}^{n}}$};
\draw (61.8,161) node [anchor=north west][inner sep=0.75pt]  [font=\scriptsize]  {$\textcolor[rgb]{0.96,0.65,0.14}{B_{e_{2}}\mathbb{S}^{n}}$};
\draw (381,43.9) node [anchor=north west][inner sep=0.75pt]  [font=\scriptsize]  {$\textcolor[rgb]{0.82,0.01,0.11}{e_{1}}$};
\draw (417.5,175.4) node [anchor=north west][inner sep=0.75pt]  [font=\scriptsize]  {$\textcolor[rgb]{0.96,0.65,0.14}{e_{2}}$};

\end{tikzpicture}
			\caption{Illustration of the proof of Theorem~\ref{maintheorem1}. Each node of $\mathcal{G}^{(l)}$ is the starting node of some edge. Suppose a node $s \in \mathcal{U}$ is the starting node of two edges $e_1$ and $e_2$. For each edge, there is only one corresponding set $\Omega' = \Omega'_{e_1} = \Omega'_{e_2}$ such that, for $e \in \{e_1, e_2\}$, $B_e(\mathbb{S}^n \setminus \Omega') \subset \overline{\lambda}_{e}\mathbb{B}^n$. The quantity $\alpha$ is the radius of the largest sphere inscribed in $\text{conv}(\mathbb{S}^n \setminus \Omega')$.}
			\label{illustration_proof}
		\end{figure}
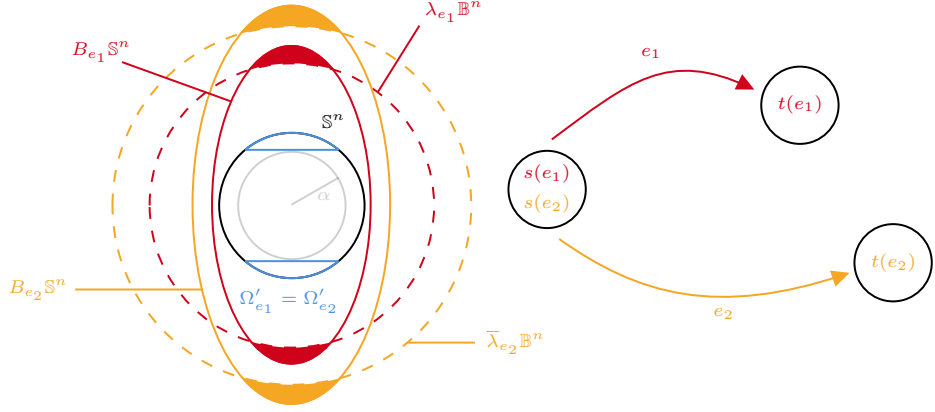
		
		We now summarize. We first define 
		\begin{equation}
			\overline{\rho}_e
			= \frac{\overline{\lambda}_e}{\delta(\varepsilon(\beta, N) |\mathcal{E}(G^{(l)})| \kappa(P^*_{s(e)}(\omega_N)) / 2)}.
		\end{equation}
		Following steps \ref{item0}, \ref{item1}, \ref{item2}, \ref{item3}, the following holds with probability at least $1 - \beta$: 
		\begin{equation} \label{inclusion_circle}
			\forall e \in \mathcal{E} : B_e(\mathbb{S}^n) \subset \overline{\rho}_e \mathbb{B}^n.
		\end{equation}
		From the definition of $B_e$, \eqref{inclusion_circle} is equivalent to 
		\begin{equation}
			\forall e \in \mathcal{E} : A_e \left( L_{s(e)}^{-1}(\mathbb{S}^n) \right) \subset \overline{\rho}_e \left( L_{t(e)}^{-1}(\mathbb{B}^n) \right), 
		\end{equation}
		which implies that
		\begin{equation} \label{levelsetsinclusion}
			\forall e \in \mathcal{E} : A_e \left( L_{s(e)}^{-1}(\mathbb{B}^n) \right) \subset \overline{\rho}_e \left( L_{t(e)}^{-1}(\mathbb{B}^n) \right).
		\end{equation}
		
		Since $\overline{\rho}_e \leq \overline{\rho}(\omega_N)^l$ for all $e \in \mathcal{E}$ following step~\ref{item0}, the next condition holds with probability at least $1 - \beta$: 
		\begin{equation} \label{levelsetsinclusionGlobal}
			\forall e \in \mathcal{E} : A_e \left( L_{s(e)}^{-1}(\mathbb{B}^n) \right) \subset \overline{\rho}(\omega_N)^l \left( L_{t(e)}^{-1}(\mathbb{B}^n) \right).
		\end{equation}
		Furthermore, since $L^{-1}_{s(e)}(\mathbb{B}^n)$ and $L^{-1}_{t(e)}(\mathbb{B}^n)$ are respectively the one-level set of the quadratic norms $\| \cdot \|_{P_{s(e)}}$ and $\| \cdot \|_{P_{s(e)}}$, condition \eqref{levelsetsinclusionGlobal} is equivalent to 
		\begin{equation}
			\forall e \in \mathcal{E}, \forall x \in \mathbb{S}^n : \|A_ex\|_{P_{t(e)}} \leq \overline{\rho}(\omega_N)^l \|x\|_{P_{s(e)}}, 
		\end{equation}
		which yields that $\gamma^*(\mathcal{G}^{(l)}, \mathcal{A}_{\mathcal{G}}^{(l)}) \leq \overline{\rho}(\omega_N)^l$ following definition~\eqref{quadraticupperboundCJSR}. Therefore, it holds that 
		\begin{equation}
			\rho(\mathcal{G}, \mathcal{A}) \leq \gamma^*(\mathcal{G}^{(l)}, \mathcal{A}_{\mathcal{G}}^{(l)})^{1/l} \leq \overline{\rho}(\omega_N)
		\end{equation}
		with probability at least $1 - \beta$, and the proof is completed.
	\end{proof}
	
\begin{rem} \label{rem_lower_bound_instead}
	In parallel to \citep[Remark~16]{kenanian_data_2019}, in the proof of Theorem~\ref{maintheorem1}, in \eqref{omega_prime_upper} for all edge $e \in \mathcal{E}$, one can compute a lower bound on $\mu^x(\mathbb{S}^n \setminus \Omega'_e)$, the measure of the complement of $\Omega'_e$, instead of finding an upper bound on $\mu^x(\Omega'_e)$. In this case, 
\begin{equation}
	\mu^x(\mathbb{S}^n \setminus \Omega'_e)
	\geq \sqrt{\frac{\det(P_{s(e)})}{\lambda^{\max}(P_{s(e)})^n}}\mu^x(\mathbb{S}^n \setminus \Omega)
	= \sqrt{\frac{\det(P_{s(e)})}{\lambda^{\max}(P_{s(e)})^n}}(1 - \mu^x(\Omega)).
\end{equation}
As a consequence, one can replace $\varepsilon(\beta, N)|\mathcal{E}(G^{(l)})|\sqrt{\det(P^*_{s(e)}(\omega_N))/\lambda^{\min}(P^*_{s(e)}(\omega_N))^n}$ in the expression of the bound \eqref{maintheorem1-bound} by
\begin{equation}
	1 - \left( (1 - \varepsilon(\beta, N)|\mathcal{E}(G^{(l)})|)\sqrt{\frac{\det(P^*_{s(e)}(\omega_N))}{\lambda^{\max}(P^*_{s(e)}(\omega_N))^n}}\right), 
\end{equation}
therefore providing an alternative bound. The user can then take the tightest bound as the best probabilistic upper bound on the CJSR.
\end{rem}
	
Theorem~\ref{maintheorem1} provides a probabilistic stability guarantee. For a finite sample set $\omega_N$, and for some confidence level $\beta \in (0, 1)$, if one computes $\overline{\rho}(\omega_N) < 1$, then, following Proposition~\ref{CJSRstability}, one has a certificate of stability for the observed CSLS with probability of at least $1 - \beta$. Moreover, it can be proven that, if one assumes no noise on the observations, for $l$ and $N$ going to infinity, the bound \eqref{maintheorem1-bound} is arbitrarily close to the true CJSR. As this result is a consequence of the well known model-based result \eqref{l-lifting-eq}, and follows a similar reasoning as in the proof of \citep[Theorem~17]{kenanian_data_2019}, we do not provide it in this paper.

%

	\section{Learning stability guarantees by observing continuous states only}
\label{continuousStates}

In some situations, it can be difficult to observe the discrete part of the dynamics. In this section, we show that one may still be able to derive a probabilistic upper bounds for the underlying CSLS under reasonable assumptions. For this purpose, we first present a model-based result on the CJSR that allows to study the stability of a CSLS via the analysis of an auxiliary arbitrary switching linear system. Then, we formulate the problem of finding a probabilistic upper bound on the CJSR from continuous states. Finally, we provide a solution based on chance-constrained optimization.

\subsection{Reducing constrained systems to arbitrary systems}
In this subsection we provide a general result, of application in the wider context of model-based analysis of CSLS, which will be useful for our purpose.

Following Remark~\ref{flower}, the quantity $\rho( \mathcal{F}_m, \mathcal{A}_{\mathcal{G}}^{(l)})$ is the (unconstrained) joint spectral radius of the ASLS defined on the (constrained) set of matrices $\mathcal{A}_{\mathcal{G}}^{(l)}$. Proposition~\ref{lifting} states that an upper bound on the initial CJSR can be derived from this quantity. In the next subsection, we will use the continuous data to which we have access to approximate it, and thereby obtain probabilistic stability guarantees.
\begin{prop}
\label{lifting}
    Consider a CSLS $S(\mathcal{G}, \mathcal{A})$. For any length $l \in \mathbb{N}_{>0}$, the following holds: 
    \begin{equation}
    \label{finiteIneq}
        \rho(\mathcal{G}, \mathcal{A}) \leq \rho(\mathcal{F}_m, \mathcal{A}_{\mathcal{G}}^{(l)})^{1/l}.
    \end{equation}
    Moreover, asymptotic equality holds, that is
    \begin{equation}
    \label{asymEq}
        \rho(\mathcal{G}, \mathcal{A}) = \lim_{l \to \infty} \rho( \mathcal{F}_m, \mathcal{A}_{\mathcal{G}}^{(l)})^{1/l}.
    \end{equation}
\end{prop}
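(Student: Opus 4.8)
The plan is to establish the finite-length inequality \eqref{finiteIneq} for each fixed $l$ first, and then derive the asymptotic equality \eqref{asymEq} by sandwiching the limit between two quantities that both equal $\rho(\mathcal{G},\mathcal{A})$. Throughout I use the fact, recalled in Remark~\ref{flower}, that $\rho(\mathcal{F}_m, \mathcal{A}_{\mathcal{G}}^{(l)})$ is exactly the (unconstrained) joint spectral radius of the matrix set $\mathcal{A}_{\mathcal{G}}^{(l)}$, i.e. the growth rate of \emph{arbitrary} products of matrices drawn from $\mathcal{A}_{\mathcal{G}}^{(l)}$.

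For \eqref{finiteIneq}, I would fix $l \in \mathbb{N}_{>0}$ and restrict attention to constrained products whose length is a multiple of $l$. Taking any $k \in \mathbb{N}_{>0}$ and setting $N = lk$, I start from a word in $\mathcal{L}_{\mathcal{G}, N}$, realized by an admissible path $(e_0, \dots, e_{N-1})$ of $\mathcal{G}$. Grouping the edges into $k$ consecutive blocks of length $l$, each block is itself an admissible path of length $l$ (a subpath of an admissible path is admissible, directly from the condition $t(e_i) = s(e_{i+1})$ defining the language), so the block product $M_j = A_{\sigma(e_{jl-1})} \cdots A_{\sigma(e_{(j-1)l})}$ lies in $\mathcal{A}_{\mathcal{G}}^{(l)}$ by Definition~\ref{product_lifting}. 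Since $A_{\sigma(e_{N-1})} \cdots A_{\sigma(e_0)} = M_k \cdots M_1$, every constrained product of length $N$ is an unconstrained product of $k$ matrices from $\mathcal{A}_{\mathcal{G}}^{(l)}$, whence
\[
\max_{(\sigma(e_0), \dots, \sigma(e_{N-1})) \in \mathcal{L}_{\mathcal{G}, N}} \left\| A_{\sigma(e_{N-1})} \cdots A_{\sigma(e_0)} \right\| \leq \max_{M_1, \dots, M_k \in \mathcal{A}_{\mathcal{G}}^{(l)}} \left\| M_k \cdots M_1 \right\|.
\]
Raising both sides to the power $1/N = 1/(lk)$ and letting $k \to \infty$, the left-hand side converges to $\rho(\mathcal{G}, \mathcal{A})$ (it is the subsequence $N = lk$ of the limit in Definition~\ref{CJSRdef}), while the right-hand side converges to $\rho(\mathcal{F}_m, \mathcal{A}_{\mathcal{G}}^{(l)})^{1/l}$, giving \eqref{finiteIneq}.

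For the asymptotic equality, \eqref{finiteIneq} already yields $\rho(\mathcal{G}, \mathcal{A}) \leq \liminf_{l \to \infty} \rho(\mathcal{F}_m, \mathcal{A}_{\mathcal{G}}^{(l)})^{1/l}$, so it remains to bound the $\limsup$ from above by $\rho(\mathcal{G},\mathcal{A})$. Here I would invoke the elementary estimate that, for any finite matrix set $\mathcal{M}$ and any submultiplicative norm, $\rho(\mathcal{F}_m, \mathcal{M}) \leq \max_{M \in \mathcal{M}} \|M\|$, which is immediate from $\|M_k \cdots M_1\| \leq \prod_j \|M_j\|$. Applied to $\mathcal{M} = \mathcal{A}_{\mathcal{G}}^{(l)}$ this gives
\[
\rho(\mathcal{F}_m, \mathcal{A}_{\mathcal{G}}^{(l)})^{1/l} \leq \left( \max_{M \in \mathcal{A}_{\mathcal{G}}^{(l)}} \| M \| \right)^{1/l} = \left( \max_{(\sigma(e_0), \dots, \sigma(e_{l-1})) \in \mathcal{L}_{\mathcal{G}, l}} \| A_{\sigma(e_{l-1})} \cdots A_{\sigma(e_0)} \| \right)^{1/l},
\]
and the right-hand side is precisely the $l$-th term of the sequence defining the CJSR, hence converges to $\rho(\mathcal{G},\mathcal{A})$ as $l \to \infty$. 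Combining the two bounds forces $\liminf = \limsup = \rho(\mathcal{G},\mathcal{A})$, which is \eqref{asymEq}.

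The delicate points are bookkeeping rather than conceptual. I must make sure the block product $M_k \cdots M_1$ is assembled in the correct right-to-left order matching the convention of Definition~\ref{product_lifting}, and that passing to the subsequence $N = lk$ is legitimate, which is automatic once the defining limit of the CJSR is known to exist. The one substantive step is the claim that a subpath of an admissible path is admissible, so that each block product genuinely belongs to $\mathcal{A}_{\mathcal{G}}^{(l)}$; this is the crux that makes the constrained length-$N$ products a subset of the arbitrary products of length-$l$ blocks, and it follows directly from the concatenation condition $t(e_i) = s(e_{i+1})$ in the definition of $\mathcal{L}_{\mathcal{G}, N}$.
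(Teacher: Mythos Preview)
Your proof is correct and follows essentially the same approach as the paper: the block decomposition you spell out is exactly the inclusion $\mathcal{A}_{\mathcal{G}}^{(lt)} \subseteq (\mathcal{A}_{\mathcal{G}}^{(l)})^t$ that the paper invokes for \eqref{finiteIneq}, and your submultiplicativity bound $\rho(\mathcal{F}_m,\mathcal{M}) \leq \max_{M\in\mathcal{M}}\|M\|$ for \eqref{asymEq} is the same estimate the paper obtains (citing the Lyapunov characterization \eqref{CJSRotherdef} rather than submultiplicativity directly). Your use of $\liminf$/$\limsup$ is slightly more careful than the paper's presentation, but the argument is otherwise identical.
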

\begin{proof}
	First we prove inequality~\eqref{finiteIneq}. Since for all $t > 0$ and $l > 0$, $\mathcal{A}_{\mathcal{G}}^{(lt)} \subseteq \left(\mathcal{A}_{\mathcal{G}}^{(l)}\right)^t$, Definition~\ref{CJSRdef} and Remark~\ref{flower} yields 
\begin{equation}
\begin{aligned}
	\rho(\mathcal{F}_m, \mathcal{A}_{\mathcal{G}}^{(l)}) 	&= \lim_{t \to \infty} \max \left\{ \| A \|^{1/t}: \, A \in \left(\mathcal{A}_{\mathcal{G}}^{(l)}\right)^t \right\} \\
					&\geq \lim_{t \to \infty} \max \left\{ \| A \|^{1/t}: \, A \in \mathcal{A}_{\mathcal{G}}^{(lt)} \right\}.			
\end{aligned}					
\end{equation}
Letting $k = lt$ and, again, using Definition~\ref{CJSRdef}, the inequality becomes
\begin{equation} \label{proof_geq}
\begin{aligned}
	\rho(\mathcal{F}_m, \mathcal{A}_{\mathcal{G}}^{(l)}) 	&\geq \lim_{t \to \infty} \max \left\{ \| A \|^{l/k}: \, A \in \mathcal{A}_{\mathcal{G}}^{(k)} \right\} \\
					&= \left( \lim_{k \to \infty} \max \left\{ \| A \|^{1/k}: \, A \in \mathcal{A}_{\mathcal{G}}^{(k)} \right\} \right)^l \\
					&= \rho(\mathcal{G}, \mathcal{A})^l, 
\end{aligned}
\end{equation}
which is the desired result. Now we prove asymptotic equality~\eqref{asymEq}. First, given the description of the CJSR in \eqref{CJSRotherdef}, for any $l > 0$,
\begin{equation}
\begin{aligned}
	\rho(\mathcal{F}_m, \mathcal{A}_{\mathcal{G}}^{(l)}) &\leq \max \left\{ \| A \|: \, A \in \mathcal{A}_{\mathcal{G}}^{(l)} \right\} \\
	\iff \rho(\mathcal{F}_m, \mathcal{A}_{\mathcal{G}}^{(l)})^{1/l} &\leq \max \left\{ \| A \|^{1/l}: \, A \in \mathcal{A}_{\mathcal{G}}^{(l)} \right\}.
\end{aligned}
\end{equation}
Taking the limit in both sides, it gives $\lim_{l \to \infty} \rho(\mathcal{F}_m, \mathcal{A}_{\mathcal{G}}^{(l)})^{1/l} \leq \rho(\mathcal{G}, \mathcal{A})$. Combining with \eqref{proof_geq}, it gives the desired result.
\end{proof}



\subsection{Problem formulation}

In this section, we formulate the problem of finding probabilistic stability guarantees only from continuous information. For some unknown labeled graph $\mathcal{G} = (G, \sigma)$ and set of matrices $\mathcal{A} = \{A_1, \dots, A_m\}$, consider the CSLS $S(\mathcal{G}, \mathcal{A})$, for which one can sample data containing only information about the continuous states. 

More formally, consider the $l$-product lift of the observed system, that is $S(\mathcal{G}^{(l)}, \mathcal{A}_{\mathcal{G}}^{(l)})$, where $\mathcal{G} = (G^{(l)}, \sigma^{(l)})$ according to Definition~\ref{product_lifting}. One \emph{sample of length $l$} is defined as the triplet $(x, \sigma, w)$, where 
\begin{itemize}
	\item $x \in \mathbb{S}^n$ is an initial continuous state: $x(0) = x$ in \eqref{hybrid_dynamics}, sampled according to the uniform measure $\mu^{x}$ as described in Section~\ref{hybrid_pf}.
	\item $\sigma$ is such that $\sigma = \sigma^{(l)}(e)$, for some unknown edge $e \in \mathcal{E}(G^{(l)})$. $\sigma$ is sampled according to the probability measure $\mu^{\sigma}_l$, which is assumed to be uniform on $\mathcal{L}_{G, l}$.
	\item $w \in B^n[W]$ is a noise on the continuous state-space, where $W > 0$ is a known upper bound, sampled according to the probability measure $\mu^w$ as described in Section~\ref{hybrid_pf}.
\end{itemize}
In the continuous framework, with $\Delta_{\text{cont}} = \mathbb{S}^n \times [m] \times B^n[W]$, the sample set is therefore defined as 
\begin{equation} \label{cont_observations}
	\omega_{N, \text{cont}} = \left\{\left(x_i, \sigma_i, w_i\right)\right\}_{i = 1, \dots, N} \subset \Delta_{\text{cont}}, 
\end{equation}
where each element $(x_i, \sigma_i, w_i)$ is sampled according to the product probability measure $\pi_{l, \text{cont}} = \mu^x \times \mu^\sigma_l \times \mu^w$.

\begin{rem}
In parallel to Remark~\ref{hybrid_link_samples_observations}, but in the framework of this section, we define an \emph{observation} as a continuous state $x \in \mathbb{R}$. The link between the sample $(x_i, \sigma_i, w_i)$ and the pair of observations $x_i, y_i$, is that $y_i = A_{\sigma_i}x_i + w_i$. 
\end{rem}

Our second problem is stated as follows.
\begin{prob} \label{problem2}
	Consider an unknown CSLS $S(\mathcal{G}, \mathcal{A})$. For a fixed length $l \in \mathbb{Z}_{>0}$ and number $N \in \mathbb{Z}_{>0}$ of samples, consider the sample set $\omega_{N, \text{cont}}$ as described in \eqref{cont_observations}. For some user-defined confidence level of at least $\beta \in (0, 1)$, and knowing $\mathcal{U}(G)$ and $|\mathcal{L}_{\mathcal{G}, l}|$, the number of words of length $l$ generated by $\mathcal{G}$, provide an upper bound $ \rho(\mathcal{G}, \mathcal{A}) \leq  \overline{\rho}$.
\end{prob}

\subsection{Results}

To answer to Problem~\ref{problem2}, we will now approximate a quadratic Lyapunov function for the unknown arbitrary system $S(\mathcal{F}_m, \mathcal{A}_{\mathcal{G}}^{(l)})$ following Proposition~\ref{lifting}. In order to do so, let us introduce the following \emph{sampled program}:
\begin{subequations}
\label{scenario_program_2}
\begin{align}
	\mathcal{P}(\omega_{N, \text{cont}}): &\min_{\substack{\lambda \in \mathbb{R}_{\geq 0} \\ P \in \mathbb{R}^{n \times n}}} \left(\lambda, c(P)\right) \\
	\text{s.t. } 
	&P \in \mathcal{X} = \{ P \in \mathbb{R}^{n \times n} \, | \, P \succeq I, \| P \|_F \leq C \},  \\
	&\forall (x, \sigma, w) \in \omega_{N, \text{cont}}: 
	\left\| 
	A_{\sigma} x + w
	\right\|_{P}
	\leq 
	\lambda^{l} 
	\left\|
	x
	\right\|_{P}, 
\end{align}
\end{subequations}
where $c: \mathcal{X} \to \mathbb{R} : P \mapsto \|P\|_F$, and $C$ is sufficiently large. We denote the optimal solutions of the sampled program~\eqref{scenario_program_1} by $\lambda^*(\omega_{N, \text{cont}})$ and $P^*(\omega_{N, \text{cont}})$. 

We first present the following chance-constrained result. 

\begin{prop}
\label{PAC3_cont}
	Consider an unknown CSLS $S(\mathcal{G}, \mathcal{A})$. For any given number of samples $N \in \mathbb{Z}_{>0}$ such that $N \geq d = n(n+1)/2$, and for $l \in \mathbb{N}_{> 0}$, let $\omega_{N, \text{cont}} \subset \Delta_{\text{cont}}$ be the sample set as described in \eqref{cont_observations}. Given $\lambda^*(\omega_{N, \text{cont}})$ and $P^*(\omega_{N, \text{cont}})$ the solutions of the sampled program~\eqref{scenario_program_2}, let 	
	\begin{equation} \label{viol_cont}
		V(\omega_{N, \text{cont}})
			= \left\{ (x, \sigma, w) \in \Delta_{\text{cont}} \, | \,
			\left\|A_\sigma x
			\right\|_{P^*(\omega_{N, \text{cont}})}
			>
			\overline{\lambda}(\omega_{N, \text{cont}})
			\left\|x\right\|_{P^*(\omega_{N, \text{cont}})}
		\right\}, 
	\end{equation}
	where 
	\begin{equation} \label{lambda_with_noise_cont}
		\overline{\lambda}(\omega_{N, \text{cont}}) = \lambda^*(\omega_{N, \text{cont}})^l + \sqrt{\frac{\lambda^{\max}\left(P^*(\omega_{N, \text{cont}}\right))}{\lambda^{\min}\left(P^*(\omega_{N, \text{cont}}))\right)}} W.
	\end{equation}
	Then, for any confidence level $\beta \in (0, 1)$, the following holds: 
	\begin{equation} \label{PAC_bound_cont}
		\pi_{l, \text{cont}}^N\left(
			\left\{
				\omega_{N, \text{cont}} \subset \Delta_{\text{cont}} \, 
				| \, \pi_{l, \text{cont}}\left(
					V(\omega_{N, \text{cont}})
				\right) \leq \varepsilon(\beta, N)
			\right\}
		\right) \geq 1 - \beta, 
	\end{equation} 
	where $\varepsilon(\beta, N)$ is defined in \eqref{vareps}.
\end{prop}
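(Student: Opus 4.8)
The plan is to transcribe, almost verbatim, the two-step argument behind Proposition~\ref{PAC1} to the single-matrix, label-sampling setting of program~\eqref{scenario_program_2}. Throughout I would abbreviate $\lambda = \lambda^*(\omega_{N,\text{cont}})$, $P = P^*(\omega_{N,\text{cont}})$ and $\overline{\lambda} = \overline{\lambda}(\omega_{N,\text{cont}})$, writing $A_\sigma$ for the lifted product matrix associated with a sampled word $\sigma \in \mathcal{L}_{\mathcal{G},l}$.

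First, I would observe that program~\eqref{scenario_program_2} is the sampled form of a quasi-linear program in the sense of \citep{berger_chance-constrained_2021}: the objective $c(P)=\|P\|_F$ is strongly convex, the feasible set $\mathcal{X}$ is compact, and $\omega_{N,\text{cont}}$ is a finite subset of $\Delta_{\text{cont}}$. Since now there is a single symmetric decision matrix, the relevant dimension is $d=n(n+1)/2$, which is precisely why the hypothesis reads $N\geq d$ with this value rather than the $|\mathcal{U}|n(n+1)/2$ of Proposition~\ref{PAC1}. Introducing the noisy violation set
\[
	\tilde{V}(\omega_{N,\text{cont}}) = \left\{ (x,\sigma,w)\in\Delta_{\text{cont}} \,\middle|\, \|A_\sigma x + w\|_P > \lambda^l \|x\|_P \right\},
\]
and invoking the almost-sure non-degeneracy of the sample set (the continuous-state analogue of Proposition~\ref{non_degenerate_for_sure}, itself resting on Assumption~\ref{ass_on_w_dis} and proved as in \ref{app}), \citep[Theorem~6]{berger_chance-constrained_2021} yields directly that $\pi_{l,\text{cont}}(\tilde{V}(\omega_{N,\text{cont}})) \leq \varepsilon(\beta,N)$ with probability at least $1-\beta$.

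Second, I would establish the inclusion $V(\omega_{N,\text{cont}}) \subseteq \tilde{V}(\omega_{N,\text{cont}})$, the continuous-state counterpart of item~(ii) in the proof of Proposition~\ref{PAC1}. For $(x,\sigma,w)\in V(\omega_{N,\text{cont}})$ one has $\overline{\lambda}\|x\|_P < \|A_\sigma x\|_P$, and subadditivity gives $\|A_\sigma x\|_P \leq \|A_\sigma x + w\|_P + \|w\|_P$. Using the two-sided bound $\sqrt{\lambda^{\min}(P)}\,\|v\| \leq \|v\|_P \leq \sqrt{\lambda^{\max}(P)}\,\|v\|$ together with $\|x\|=1$, the inflation term in the definition~\eqref{lambda_with_noise_cont} of $\overline{\lambda}$ obeys $\sqrt{\lambda^{\max}(P)/\lambda^{\min}(P)}\,W\,\|x\|_P \geq \sqrt{\lambda^{\max}(P)}\,W \geq \|w\|_P$, so the noise contribution cancels and I recover $\lambda^l\|x\|_P < \|A_\sigma x + w\|_P$, i.e. $(x,\sigma,w)\in\tilde{V}(\omega_{N,\text{cont}})$.

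The inclusion gives $\pi_{l,\text{cont}}(V(\omega_{N,\text{cont}})) \leq \pi_{l,\text{cont}}(\tilde{V}(\omega_{N,\text{cont}}))$, so the event bounding $\tilde{V}$ from the first step also bounds $V$, which completes the argument. I do not anticipate a genuine obstacle, since the reasoning is a direct transcription of Proposition~\ref{PAC1} specialized to the single-node graph $\mathcal{F}_m$ in the spirit of Proposition~\ref{lifting}; the only points demanding care are bookkeeping ones, namely confirming that a single Lyapunov matrix reduces the dimension to $d=n(n+1)/2$, and checking that the non-degeneracy argument of \ref{app} survives unchanged when words $\sigma\in\mathcal{L}_{\mathcal{G},l}$ are sampled in place of the edges $e\in\mathcal{E}(G^{(l)})$.
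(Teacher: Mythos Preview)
Your proposal is correct and follows essentially the same approach as the paper: first apply \citep[Theorem~6]{berger_chance-constrained_2021} to the noisy violation set $\tilde{V}(\omega_{N,\text{cont}})$ using the quasi-linear structure of program~\eqref{scenario_program_2} and almost-sure non-degeneracy, then show $V(\omega_{N,\text{cont}})\subseteq\tilde{V}(\omega_{N,\text{cont}})$ via the same subadditivity and eigenvalue-bound argument as in item~\ref{prop_proof_item2} of the proof of Proposition~\ref{PAC1}. The paper in fact abbreviates the second step by simply referring back to that item, whereas you spell out the inequalities explicitly.
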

\begin{proof}
	For the sake of readability, we note $\lambda = \lambda^*(\omega_{N, \text{cont}})$, $P = P^*(\omega_{N, \text{cont}})$ and $\overline{\lambda} = \overline{\lambda}(\omega_{N, \text{cont}})$. In the same fashion as in the proof of Proposition~\ref{PAC1}, we first prove that 
	\begin{equation} \label{PAC_result_cont}
		\pi^N_{l, \text{cont}}(\{
			\omega_{N, \text{cont}} \subset \Delta_{\text{cont}} \, | \,
			\pi_{l, \text{cont}} \left(\tilde{V}\left(\omega_{N, \text{cont}}\right)\right)
			\leq  
			\varepsilon(\beta, N)
		\}) 
		\geq 
		1 - \beta, 
	\end{equation}
	with 
	\begin{equation}
		\tilde{V}(\omega_{N, \text{cont}}) = 
		\{ 
			(x, \sigma, w) \in \Delta_{\text{cont}} \,|\, 
			\|A_\sigma x + w\|_P 
			>
			\lambda \|x\|_P
		\}. 
	\end{equation}
	Again, \eqref{PAC_result_cont} holds by \citep[Theorem~6]{berger_chance-constrained_2021}, since \eqref{scenario_program_2} is the sampled form of a quasi-linear program, as defined in \citep{berger_chance-constrained_2021}, and that $\omega_{N, \text{cont}}$ is almost surely non-degenerate following Proposition~\ref{non_degenerate_for_sure}. It remains to prove that $V(\omega_{N, \text{cont}}) \subseteq \tilde{V}(\omega_{N, \text{cont}})$, that is, for any $(x, \sigma, w) \in \Delta_{\text{cont}}$, 
	\begin{equation}
		\|A_\sigma x\|_{P} > \overline{\lambda} \|A\|_P
		\implies
		\|A_\sigma x + w \| > \lambda^l \|A\|_P.
	\end{equation}
	The latter holds by applying the exact same argument as in \ref{prop_proof_item2} in the proof of Proposition~\ref{PAC1}.
\end{proof}
%

Now we present the main theorem of this section by leveraging the model-based result obtained in Proposition~\ref{lifting}.

\begin{thm} \label{maintheorem2}
	Consider an unknown CSLS $S(\mathcal{G}, \mathcal{A})$. For any given number of samples $N \in \mathbb{Z}_{>0}$ such that $N \geq d=  n(n+1)/2$, and for $l \in \mathbb{N}_{> 0}$, let $\omega_{N, \text{cont}} \subset \Delta_{\text{cont}}$ be the sample set as described in \eqref{cont_observations}. Given $\lambda^*(\omega_{N, \text{cont}})$ and $P^*(\omega_{N, \text{cont}})$ the solutions of the sampled program~\eqref{scenario_program_2}, for any confidence level $\beta \in (0, 1)$, the following holds: 
	\begin{equation}
		\pi_{l, \text{cont}}^N\left(
			\left\{
				\omega_{N, \text{cont}} \subset \Delta_{\text{cont}} \, 
				| \, \rho(\mathcal{G}, \mathcal{A}) \leq \overline{\rho}(\omega_{N, \text{cont}})
			\right\}
		\right) \geq 1 - \beta, 
	\end{equation} 
	where
	\begin{equation} \label{maintheorem2-bound}
		\overline{\rho}(\omega_{N, \text{cont}}) =  \left( 
			\frac{\overline{\lambda}(\omega_{N, \text{cont}})}{\delta\left(\varepsilon(\beta, N) |\mathcal{L}_{\mathcal{G}, l}| \kappa(P^*(\omega_{N, \text{cont}})) / 2 \right)}
		\right)^{1/l}, 
	\end{equation}
	with $\varepsilon(\beta, N)$, $\kappa(P^*(\omega_{N, \text{cont}}))$, $\delta(\cdot)$ and $\overline{\lambda}(\omega_{N, \text{cont}})$ respectively defined in \eqref{vareps}, \eqref{kappa}, \eqref{delta} and \eqref{lambda_with_noise_cont}.
\end{thm}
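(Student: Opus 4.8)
The plan is to mirror the geometric argument of Theorem~\ref{maintheorem1}, but to exploit the fact that the sampled program~\eqref{scenario_program_2} produces a \emph{single} matrix $P$ rather than one matrix per node. This single quadratic norm plays the role of a \emph{common} Lyapunov function for the arbitrary switching system $S(\mathcal{F}_m, \mathcal{A}_{\mathcal{G}}^{(l)})$, which is precisely the object controlled by Proposition~\ref{lifting}. For readability I would write $\lambda = \lambda^*(\omega_{N, \text{cont}})$, $P = P^*(\omega_{N, \text{cont}})$ and $\overline{\lambda} = \overline{\lambda}(\omega_{N, \text{cont}})$, and I would reuse $\kappa$ and $\delta$ as in Theorem~\ref{maintheorem1}.

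First I would translate the chance-constrained bound of Proposition~\ref{PAC3_cont} into a statement about a single bad set on the sphere: with probability at least $1-\beta$ there is a set $\Omega \subset \mathbb{S}^n$ such that for every $\sigma \in \mathcal{L}_{\mathcal{G}, l}$ and every $x \in \mathbb{S}^n \setminus \Omega$ one has $\|A_\sigma x\|_P \leq \overline{\lambda}\|x\|_P$. Since $\mu^\sigma_l$ is uniform on $\mathcal{L}_{\mathcal{G}, l}$, a union bound over the finitely many labels gives $\mu^x(\Omega) \leq \varepsilon(\beta, N)|\mathcal{L}_{\mathcal{G}, l}|$, which is where the number of words (rather than the number of edges of the lifting, as in Theorem~\ref{maintheorem1}) enters the final estimate.

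Next I would reuse the geometric steps of Theorem~\ref{maintheorem1} almost verbatim. Taking the Cholesky factorization $P = L^T L$ and setting $B_\sigma = L A_\sigma L^{-1}$ turns the inequality into $\|B_\sigma x\| \leq \overline{\lambda}\|x\|$ on $\mathbb{S}^n \setminus \Omega'$, where $\Omega' = \Pi_{\mathbb{S}^n}(L(\Omega))$; crucially there is now a \emph{single} set $\Omega'$ because there is a single $P$. The measure transfer \eqref{omega_prime_upper} gives $\mu^x(\Omega') \leq \kappa(P)\mu^x(\Omega)$, and the inscribed-sphere estimate of \citep[Proposition~13]{kenanian_data_2019} yields a radius $\alpha \geq \delta(\kappa(P)\mu^x(\Omega)/2) \geq \delta(\varepsilon(\beta, N)|\mathcal{L}_{\mathcal{G}, l}|\kappa(P)/2)$, so that $B_\sigma(\mathbb{S}^n) \subset (\overline{\lambda}/\alpha)\mathbb{B}^n$ for every $\sigma$. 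Setting $\overline{\rho}(\omega_{N, \text{cont}})^l = \overline{\lambda}/\delta(\varepsilon(\beta, N)|\mathcal{L}_{\mathcal{G}, l}|\kappa(P)/2)$ therefore gives $\|A_\sigma x\|_P \leq \overline{\rho}(\omega_{N, \text{cont}})^l\|x\|_P$ for all $\sigma \in \mathcal{L}_{\mathcal{G}, l}$ and all $x$, with probability at least $1-\beta$.

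The final step, which is the main conceptual point distinguishing this theorem from Theorem~\ref{maintheorem1}, is to interpret this single quadratic norm correctly. The family $\{A_\sigma : \sigma \in \mathcal{L}_{\mathcal{G}, l}\}$ is exactly $\mathcal{A}_{\mathcal{G}}^{(l)}$, so the inequality above exhibits a common quadratic Lyapunov function for the \emph{arbitrary} system $S(\mathcal{F}_m, \mathcal{A}_{\mathcal{G}}^{(l)})$ with contraction factor $\overline{\rho}(\omega_{N, \text{cont}})^l$; hence $\gamma^*(\mathcal{F}_m, \mathcal{A}_{\mathcal{G}}^{(l)}) \leq \overline{\rho}(\omega_{N, \text{cont}})^l$ and in particular $\rho(\mathcal{F}_m, \mathcal{A}_{\mathcal{G}}^{(l)}) \leq \overline{\rho}(\omega_{N, \text{cont}})^l$. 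Applying Proposition~\ref{lifting} in place of the lifting inequality \eqref{l-lifting-eq} then closes the chain, $\rho(\mathcal{G}, \mathcal{A}) \leq \rho(\mathcal{F}_m, \mathcal{A}_{\mathcal{G}}^{(l)})^{1/l} \leq \overline{\rho}(\omega_{N, \text{cont}})$, with probability at least $1-\beta$. I expect the hard part to be not the geometry, which transfers directly, but recognising that discarding the discrete label forces us to certify an arbitrary (rather than constrained) switching system, and therefore to route the argument through Proposition~\ref{lifting}; the price of this relaxation is the looser one-sided exponent $1/l$ of \eqref{finiteIneq} compared with the tighter two-sided bound \eqref{l-lifting-eq} used in the hybrid-state case.
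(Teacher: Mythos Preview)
Your proposal is correct and follows essentially the same route as the paper's proof: both mirror the geometric steps of Theorem~\ref{maintheorem1} with a single matrix $P$, obtain the bad set $\Omega$ with $\mu^x(\Omega)\leq \varepsilon(\beta,N)|\mathcal{L}_{\mathcal{G},l}|$ from Proposition~\ref{PAC3_cont}, perform the Cholesky change of variable and the inscribed-sphere estimate, and then close via $\gamma^*(\mathcal{F}_m,\mathcal{A}_{\mathcal{G}}^{(l)})$ together with Proposition~\ref{lifting} in place of~\eqref{l-lifting-eq}. Your identification of the key conceptual shift (certifying the arbitrary system and paying the price of the one-sided bound~\eqref{finiteIneq}) matches the paper exactly.
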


\begin{proof}
	We follow a similar reasoning as in the proof of Theorem~\ref{maintheorem1} where any labeled graph $\mathcal{G}$ is replaced by the labeled graph $\mathcal{F}_m$ as described in Remark~\ref{flower}. We will go through the main steps of the proof of Theorem~\ref{maintheorem1}. Again, for the sake of readability, we note $\lambda = \lambda^*(\omega_{N, \text{cont}})$, $P = P^*(\omega_{N, \text{cont}})$ and $\overline{\lambda} = \overline{\lambda}(\omega_{N, \text{cont}})$.
	\begin{enumerate}[(i)]
		\item We first show that, with a probability of at least $1-\beta$, there exists a set $\Omega \subset \mathbb{S}^n$ such that 
		\begin{equation} \label{eqwithomegaquad_cont}
			\forall \sigma \in \mathcal{L}_{\mathcal{G}, l}, \forall x \in \mathbb{S}^n \setminus \Omega :
			\|A_\sigma x \|_{P} \leq \overline{\lambda}\|x\|_P, 
		\end{equation}
		and that satisfies 
		\begin{equation} \label{measure_omega_cont}
			\mu^x(\Omega) \leq \varepsilon(\beta, N)|\mathcal{L}_{\mathcal{G}, l}|.
		\end{equation}
		Again, condition \eqref{eqwithomegaquad_cont} implies that
		\begin{equation} 
			\Omega = \{
				x \in \mathbb{S}^n \,|\, \exists e \in \mathcal{E} :
				\| A_e x \|_{P_{t(e)}}
				>
				\overline{\lambda}_e
				\|x\|_{P_{s(e)}}
				\}.
		\end{equation}
		Therefore, since $\mu_l^\sigma$ is uniform on $\mathcal{L}_{\mathcal{G}, l}$, $\mu^x(\Omega) \leq \pi_{l, \text{cont}}(V(\omega_{N, \text{cont}}))|\mathcal{L}_{\mathcal{G}, l}|$. Finally, following Proposition~\ref{PAC3_cont}, \eqref{measure_omega_cont} holds with probability at least $1 - \beta$.
		
		\item We now perform only one change of variable. Let $P = L^TL$ be the Cholesky decomposition of $P$. For all $\sigma \in \mathcal{L}_{\mathcal{G}, l}$, let $B_\sigma = L A_\sigma L^T$. For the same reasons as in the proof of Theorem~\ref{maintheorem1}, Equation \eqref{eqwithomegaquad_cont} becomes 
		\begin{equation}
			\forall \sigma \in \mathcal{L}_{\mathcal{G}, l}, \forall x \in \mathbb{S}^n \setminus \Omega' :
			x^T B^T_\sigma B_\sigma x \leq \overline{\lambda}^2 x^Tx, 
		\end{equation}
		with $\Omega' = \Pi_{\mathbb{S}^n}(L(\Omega))$.
		
		\item We now link $\mu^x(\Omega)$ and $\mu^x(\Omega')$. Again, for the same reasons as in the proof of Theorem~\ref{maintheorem1}, $\mu^x(\Omega') \leq \kappa(P)\mu^x(\Omega)$ holds.
		
		\item We now look at the largest sphere included in $\text{conv}(\mathbb{S}^n \setminus \Omega')$. The radius of this sphere is noted $\alpha$. Again, a similar reasoning as in the proof of Theorem~\ref{maintheorem1} yields $\alpha \geq \delta(\kappa(P)\mu^x(\Omega)/2)$, which yields
		\begin{equation} \label{last_eq_proof_cont}
			\forall \sigma \in \mathcal{L}_{\mathcal{G}, l} : B_\sigma (\mathbb{S}^n) \subset (\overline{\lambda} / \alpha)\mathbb{B}^n \subset \left(\overline{\rho}(\omega_{N, \text{cont}})^l \right) \mathbb{B}^n.
		\end{equation}
		
	\end{enumerate} 
	Summarizing, \eqref{last_eq_proof_cont} is equivalent to
	\begin{equation} \label{inclusion_cont}
		\forall \sigma \in \mathcal{L}_{\mathcal{G}, l}, \forall x \in \mathbb{S}^n : \|A_\sigma x\|_{P^*(\omega_{N, \text{cont}})} \leq \overline{\rho}(\omega_{N, \text{cont}})^l \|x\|_{P^*(\omega_{N, \text{cont}})}, 
		\end{equation}
	which holds with a probability at least $1 - \beta$. The condition \eqref{inclusion_cont} is equivalent to $\gamma^*(\mathcal{F}_m, \mathcal{A}_{\mathcal{G}}^{(l)}) \leq \overline{\rho}(\omega_{N, \text{cont}})^l$ following definition~\eqref{quadraticupperboundCJSR}. Therefore, following Proposition~\ref{lifting}, one has $\rho(\mathcal{G}, \mathcal{A}) \leq \gamma^*(\mathcal{F}_m, \mathcal{A}_{\mathcal{G}}^{(l)})^{1/l} \leq \overline{\rho}(\omega_{N, \text{cont}})$ with probability at least $1 - \beta$, and the proof is completed.
\end{proof}

\begin{rem}
	In parallel to Remark~\ref{rem_lower_bound_instead}, and for the same reason, one can replace 
\begin{equation}
	\varepsilon(\beta, N)|\mathcal{L}_{\mathcal{G}, l}|\sqrt{\frac{\det(P^*(\omega_N))}{\lambda^{\min}(P^*(\omega_N))^n}}
\end{equation} 
in the expression of the bound \eqref{maintheorem2-bound} by
\begin{equation}
	1 - \left( (1 - \varepsilon(\beta, N)|\mathcal{L}_{\mathcal{G}, l}|)\sqrt{\frac{\det(P^*(\omega_N))}{\lambda^{\max}(P^*(\omega_N))^n}}\right), 
\end{equation}
therefore providing an alternative bound. 
\end{rem}

Theorem~\ref{maintheorem2} provides a probabilistic stability guarantee in the case where one has access only to continuous states. Indeed in comparaison to Theorem~\ref{maintheorem1}, one does not need any information about the starting and terminating nodes for every observation. For a finite sample set $\omega_{N, \text{cont}}$, and for some confidence level $\beta \in (0, 1)$, then if one computes $\overline{\rho}(\omega_{N, \text{cont}}) < 1$, then, following Proposition~\ref{CJSRstability}, one has a certificate of stability for the observed CSLS with probability of at least $1 - \beta$. Again, it can be proven that, if one assumes no noise on the observations, for $l$ and $N$ going to infinity, the bound \eqref{maintheorem2-bound} is arbitrarily close to the true CJSR, thanks to the asymptotic relation \eqref{asymEq}. Again we do not explicitly prove this result in this paper.


	\section{Application to Networked Control Systems} \label{application}

In this section, we illustrate our method on a concrete example, namely Networked Control Systems (or NCS for short)\footnote{For an introduction, see \citep{hespanha_2007_NCS} and references therein.}. Networked Control Systems are dynamical systems whose plant is physically separated from its controller, so that the feedback has to go through a network. The failures experienced by the network can typically be modelled by a switching rule. For example, consider a linear dynamical system, whose controller is a linear feedback. Considering that packets can be lost in the network, the dynamics of such system is given by 
\begin{equation}
	x(k+1) = \begin{cases}
		(A + BK)x(k)  &\text{if the packet is not lost} \\
		Ax(k) &\text{otherwise.}
	\end{cases}
\end{equation}
An illustration of such dynamical system is given in Figure~\ref{NCS}. 
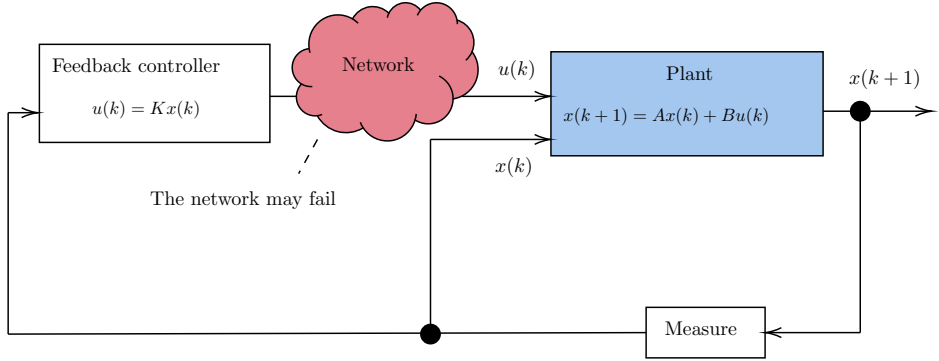
\begin{figure}[h!]
	\centering
	\scalebox{0.75}{
\tikzset{every picture/.style={line width=0.75pt}} 

\begin{tikzpicture}[x=0.75pt,y=0.75pt,yscale=-1,xscale=1]

\draw    (320,87) -- (378,87) ;
\draw [shift={(380,87)}, rotate = 180] [color={rgb, 255:red, 0; green, 0; blue, 0 }  ][line width=0.75]    (10.93,-3.29) .. controls (6.95,-1.4) and (3.31,-0.3) .. (0,0) .. controls (3.31,0.3) and (6.95,1.4) .. (10.93,3.29)   ;
\draw    (563,97) -- (634,97) ;
\draw [shift={(636,97)}, rotate = 180] [color={rgb, 255:red, 0; green, 0; blue, 0 }  ][line width=0.75]    (10.93,-3.29) .. controls (6.95,-1.4) and (3.31,-0.3) .. (0,0) .. controls (3.31,0.3) and (6.95,1.4) .. (10.93,3.29)   ;
\draw    (299,116) -- (378,116) ;
\draw [shift={(380,116)}, rotate = 180] [color={rgb, 255:red, 0; green, 0; blue, 0 }  ][line width=0.75]    (10.93,-3.29) .. controls (6.95,-1.4) and (3.31,-0.3) .. (0,0) .. controls (3.31,0.3) and (6.95,1.4) .. (10.93,3.29)   ;
\draw    (588,97) -- (588,246) ;
\draw    (588,246) -- (526,246) ;
\draw [shift={(524,246)}, rotate = 360] [color={rgb, 255:red, 0; green, 0; blue, 0 }  ][line width=0.75]    (10.93,-3.29) .. controls (6.95,-1.4) and (3.31,-0.3) .. (0,0) .. controls (3.31,0.3) and (6.95,1.4) .. (10.93,3.29)   ;
\draw    (15,247) -- (444,246) ;
\draw    (15,98) -- (15,247) ;
\draw    (15,98) -- (33,98) ;
\draw [shift={(35,98)}, rotate = 180] [color={rgb, 255:red, 0; green, 0; blue, 0 }  ][line width=0.75]    (10.93,-3.29) .. controls (6.95,-1.4) and (3.31,-0.3) .. (0,0) .. controls (3.31,0.3) and (6.95,1.4) .. (10.93,3.29)   ;
\draw    (299,116) -- (299,247) ;
\draw  [fill={rgb, 255:red, 0; green, 0; blue, 0 }  ,fill opacity=1 ] (581.5,97) .. controls (581.5,93.41) and (584.41,90.5) .. (588,90.5) .. controls (591.59,90.5) and (594.5,93.41) .. (594.5,97) .. controls (594.5,100.59) and (591.59,103.5) .. (588,103.5) .. controls (584.41,103.5) and (581.5,100.59) .. (581.5,97) -- cycle ;
\draw  [fill={rgb, 255:red, 0; green, 0; blue, 0 }  ,fill opacity=1 ] (292.5,247) .. controls (292.5,243.41) and (295.41,240.5) .. (299,240.5) .. controls (302.59,240.5) and (305.5,243.41) .. (305.5,247) .. controls (305.5,250.59) and (302.59,253.5) .. (299,253.5) .. controls (295.41,253.5) and (292.5,250.59) .. (292.5,247) -- cycle ;
\draw  [fill={rgb, 255:red, 208; green, 2; blue, 27 }  ,fill opacity=0.5 ] (217.38,54.62) .. controls (216.37,47.12) and (219.67,39.7) .. (225.89,35.5) .. controls (232.1,31.3) and (240.13,31.07) .. (246.58,34.89) .. controls (248.86,30.53) and (253.04,27.52) .. (257.85,26.77) .. controls (262.65,26.02) and (267.53,27.62) .. (271,31.08) .. controls (272.94,27.13) and (276.76,24.47) .. (281.09,24.06) .. controls (285.43,23.64) and (289.67,25.52) .. (292.31,29.04) .. controls (295.83,24.85) and (301.41,23.08) .. (306.66,24.51) .. controls (311.91,25.93) and (315.87,30.29) .. (316.83,35.7) .. controls (321.14,36.89) and (324.72,39.91) .. (326.66,43.99) .. controls (328.6,48.07) and (328.71,52.8) .. (326.95,56.97) .. controls (331.19,62.56) and (332.18,70.01) .. (329.55,76.54) .. controls (326.93,83.07) and (321.08,87.69) .. (314.19,88.69) .. controls (314.14,94.82) and (310.83,100.44) .. (305.52,103.39) .. controls (300.22,106.34) and (293.76,106.16) .. (288.62,102.92) .. controls (286.44,110.25) and (280.28,115.66) .. (272.82,116.78) .. controls (265.36,117.91) and (257.92,114.56) .. (253.73,108.19) .. controls (248.59,111.33) and (242.42,112.24) .. (236.62,110.7) .. controls (230.81,109.17) and (225.86,105.32) .. (222.88,100.03) .. controls (217.63,100.65) and (212.55,97.89) .. (210.16,93.12) .. controls (207.78,88.35) and (208.6,82.58) .. (212.21,78.68) .. controls (207.53,75.88) and (205.14,70.33) .. (206.29,64.92) .. controls (207.44,59.52) and (211.87,55.48) .. (217.27,54.91) ; \draw   (212.21,78.67) .. controls (214.43,79.99) and (216.98,80.59) .. (219.53,80.39)(222.88,100.03) .. controls (223.98,99.89) and (225.06,99.62) .. (226.08,99.2)(253.73,108.19) .. controls (252.95,107.02) and (252.31,105.76) .. (251.8,104.44)(288.62,102.92) .. controls (289.02,101.58) and (289.28,100.2) .. (289.4,98.81)(314.19,88.69) .. controls (314.24,82.17) and (310.59,76.19) .. (304.8,73.34)(326.95,56.97) .. controls (326.01,59.19) and (324.58,61.16) .. (322.76,62.72)(316.83,35.7) .. controls (316.99,36.59) and (317.07,37.5) .. (317.05,38.42)(292.31,29.04) .. controls (291.44,30.08) and (290.72,31.25) .. (290.17,32.5)(271,31.08) .. controls (270.53,32.03) and (270.18,33.04) .. (269.96,34.07)(246.58,34.89) .. controls (247.94,35.7) and (249.2,36.68) .. (250.33,37.79)(217.38,54.62) .. controls (217.51,55.65) and (217.73,56.68) .. (218.03,57.67) ;
\draw    (191,87) -- (209,87) ;
\draw  [dash pattern={on 4.5pt off 4.5pt}]  (223,116) -- (210.33,139) ;
\draw  [fill={rgb, 255:red, 74; green, 144; blue, 226 }  ,fill opacity=0.5 ] (380,57) -- (563,57) -- (563,127) -- (380,127) -- cycle ;
\draw   (36,52) -- (191,52) -- (191,118) -- (36,118) -- cycle ;
\draw   (444,229) -- (524,229) -- (524,263) -- (444,263) -- cycle ;

\draw (341,126.4) node [anchor=north west][inner sep=0.75pt]    {$x( k)$};
\draw (579,67.4) node [anchor=north west][inner sep=0.75pt]    {$x( k+1)$};
\draw (343,61.4) node [anchor=north west][inner sep=0.75pt]    {$u( k)$};
\draw (238,59.33) node [anchor=north west][inner sep=0.75pt]   [align=left] {Network };
\draw (456,65) node [anchor=north west][inner sep=0.75pt]   [align=left] {Plant};
\draw (387,93.4) node [anchor=north west][inner sep=0.75pt]    {$\Scale[0.9]{x(k+1)=Ax(k) +Bu(k)}$};
\draw (43,60) node [anchor=north west][inner sep=0.75pt]   [align=left] {Feedback controller};
\draw (70,88.4) node [anchor=north west][inner sep=0.75pt]    {$\Scale[0.9]{u(k) =Kx(k)}$};
\draw (455,237) node [anchor=north west][inner sep=0.75pt]   [align=left] {Measure};
\draw (96,149.2) node [anchor=north west][inner sep=0.75pt]   [align=left] {\begin{minipage}[lt]{113.32pt}\setlength\topsep{0pt}
\begin{center}
The network may fail
\end{center}

\end{minipage}};

\end{tikzpicture}
}
	\caption{Illustration of a NCS suffering from packet losses.}
	\label{NCS}
\end{figure}
Now, consider that one has more knowledge about the system. For example, consider that packets are never lost more than twice in a row\footnote{Although this constrained is quite simple for the sake of the example, the labeled graph framework allows for way more complex constraints.}. In this case, this system can be modelled as a CSLS $S(\mathcal{G}, \mathcal{A})$, where $\mathcal{A} = \{A_1, A_2\} = \{A + BK, A\}$ and where $\mathcal{G}$ is as depicted in Figure~\ref{twiceinarow}.

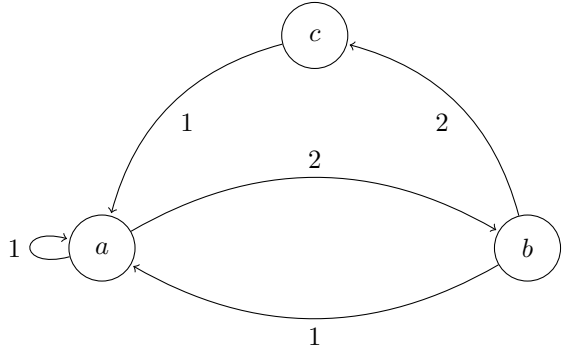
\begin{figure}[h!]
\centering
\begin{tikzpicture}[shorten >=1pt,node distance=4cm,on grid,auto] 
   \node[state] (c) {$c$}; 
   \node[state] (a) [below left = of c] {$a$}; 
   \node[state] (b) [below right = of c] {$b$};
   \path[->]
   	(a) edge[loop left] node {1} (a)
	(a) edge[bend left] node {2} (b)
	(b) edge[bend left] node {1} (a)
	(b) edge[bend right] node {2} (c)
	(c) edge[bend right] node {1} (a)
	;
\end{tikzpicture}
\caption{Labeled graph $\mathcal{G} = (G, \sigma)$ such that its language $\mathcal{L}_{G}$ contains any words except those that have more than two 2 in a row. By using the CSLS $S(\mathcal{G}, \mathcal{A})$ to model Networked Control Systems, it means that the packets cannot be lost more than twice in a row.}
\label{twiceinarow}
\end{figure}

More especially, we consider a two-dimensional NCS with a one-dimensional control signal, whose matrices are given as follows:
\begin{equation}
	A = \begin{pmatrix}
		0.45 &1.08\\ 
		0.36 & 0.09
	\end{pmatrix}, 
	B = \begin{pmatrix}
		0 \\ 1
	\end{pmatrix}, 
	K = \begin{pmatrix}
		-0.42 & -0.36
	\end{pmatrix}.
\end{equation}
Using model-based methods given in \citep{philippe_stability_2016}, we are able to compute $\rho(\mathcal{G}, \mathcal{A}) \approx 0.70697$. 

First, we show the evolution of $\overline{\rho}(\omega_N)$, the bound~\eqref{maintheorem1-bound} in case of noisy observations. In order to do so, for an increasing number of observations of length 1, we compute 20 realizations of $\overline{\rho}(\omega_N)$, such that the noises $w_i$ are independently and uniformly sampled in $B^n[W]$, for different upper bounds $W \in \{0, 0.01, 0.1\}$\footnote{Note that the case $W = 0$ corresponds to a slightly different setting that the one presented in Section~\ref{hybridStates}. However, all results hold in the noise-free setting if the sample set is non-degenerate with probability 1, which one can verifiy in practice.}, and for a confidence level of $\beta = 5\%$. The average is reported in Figure~\ref{exp1}.
\begin{figure}[h!]
	\centering
	\includegraphics[scale=0.5]{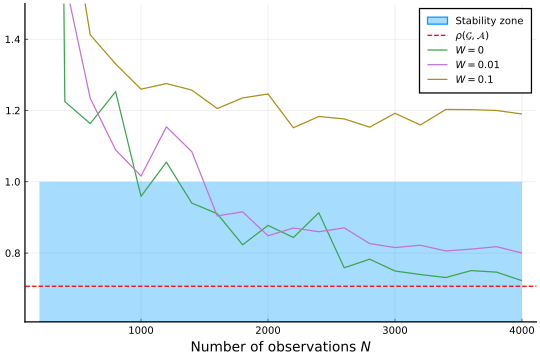}
	\caption{Evolution of the hybrid states bound $\overline{\rho}(\omega_N)$ given in Theorem~\ref{maintheorem1}, for different upper bounds on the norm of the noise $W$. One can observe that a smaller noise allows for a better approximation of the CJSR.}
	\label{exp1}
\end{figure}
We can observe that, in average, the upper bound is still valid, even for noisy observations. In addition, we can see that, for small noises such as $W = 0.01$, our bound is still able to provide stability guarantees with a confidence level of $5\%$. However, with $W = 0.1$, our bound is unable to provide any guarantee after $4000$ observations.

Second, we illustrate that our method is valid for observations of length greater than 1. The evolution of the bound presented in Theorem~\ref{maintheorem1} is presented in Figure~\ref{exp2} for $l \in \{1, 2\}$, and considering no noise. Again, for $\beta = 5\%$, 20 realizations were computed for each considered number of observations, and the average is reported in Figure~\ref{exp2}. 
\begin{figure}[h!]
	\centering
	\includegraphics[scale=0.5]{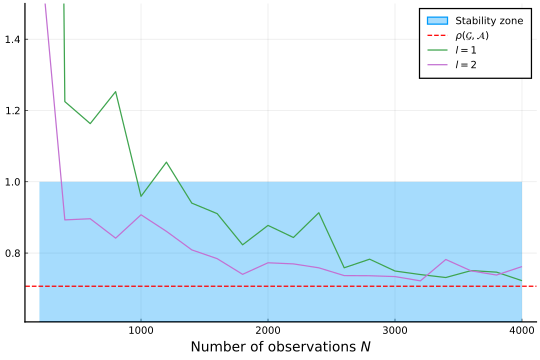}
	\caption{Evolution of the hybrid states bound $\overline{\rho}(\omega_N)$ given in Theorem~\ref{maintheorem1}, for different lengths of observations $l$. Even though the bounds evolute differently, they are both providing stability guarantees.}
	\label{exp2}
\end{figure}
We also obtain stability guarantees for $l = 2$, although we need slightly more samples.

Finally, we show that we can apply our method even if we do not have access to hybrid states, by applying the method described in Section~\ref{continuousStates} for a confidence level of $\beta = 5\%$. The evolution of the average of 20 realizations of $\overline{\rho}(\omega_{N, \text{cont}})$, as presented in Theorem~\ref{maintheorem2}, is given in Figure~\ref{exp3}, for $l \in \{1, 2, 3\}$, where we suppose that the observations are noiseless. The evolution of the bound from hybrid observations is also given as a reference.
\begin{figure}[h!]
	\centering
	\includegraphics[scale=0.5]{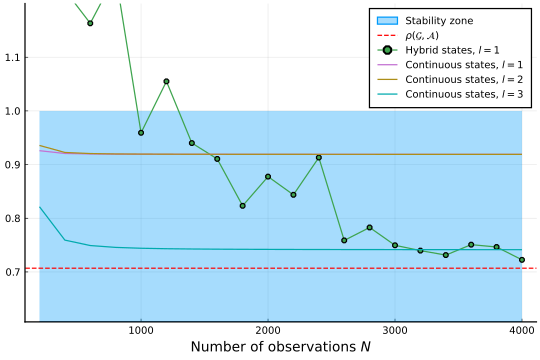}
	\caption{Illustration of the continuous states method. The lines without a marker represent the evolution of $\overline{\rho}(\omega_{N, \text{cont}})$, the bound obtained only from continuous states presented in Theorem~\ref{maintheorem2}, for different lengths $l \in \{1, 2, 3\}$. As a reference, the line with a marker is the evolution of the bound obtained from hybrid states, for $1$-step observations.}
	\label{exp3}
\end{figure}
One can see that the bound computed only from continuous states seems to converge to more conservative values that the hybrid method. This is due to the fact that we are actually approximating $\rho(\mathcal{F}_m, \mathcal{A}_{\mathcal{G}}^{(l)})^{1/l}$, which is greater or equal than $\rho(\mathcal{G}, \mathcal{A})$ following Proposition~\ref{lifting}. However, in this case, probabilistic stability guarantees were still found for each considered value of $l$.

	\section{Conclusions}

In this paper, we extended the scope of data-driven stability analysis of hybrid systems. In particular, we considered a more general model of switching linear systems, namely constrained switching linear systems. Based on white-box stability analysis tools and chance-constrained optimization results, we proposed a method that approximates the CJSR of an observed constrained system in two different frameworks. In the first one, we consider that one has access to hybrid states, that is couples of continuous states and edges in the labeled graph constraining the system. Then, we showed that, even if one does not have access to the full state-space of the hybrid system, one can still provide stability guarantees by considering the joint spectral radius of an auxiliary system. Finally, we showcased our method on a real application of CSLS, namely Networked Control Systems.

For further work, we would like to extend our method to even broader families of dynamical systems, for example stochastic switching systems such as Markovian Jump Linear Systems. Second, other prior knowledge on the noise noise could be investigated, such as Gaussian noise for instance. Finally, we plan to relax some assumptions present in this work such as, for example, the amount of knowledge about the labeled graph required to compute the bounds on the CJSR.
	\appendix
\section{Proof of Proposition~\ref{non_degenerate_for_sure}} \label{app}

First, we will need the following lemma.
\begin{lemma}[\citep{zero_set}] \label{zero_set_lemma}
	For $x \in \mathbb{R}^n$, let $p(x)$ be a non-zero polynomial. Then, the zero-set $\{x \in \mathbb{R}^n \,|\, p(x) = 0\}$ has zero Lebesgue measure.
\end{lemma}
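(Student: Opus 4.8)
The plan is to prove the statement by induction on the dimension $n$, reducing the measure of the zero-set in $\mathbb{R}^n$ to a fibered integral over $\mathbb{R}^{n-1}$ via Fubini--Tonelli. Throughout, I write $\lambda_k$ for the Lebesgue measure on $\mathbb{R}^k$.

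For the base case $n = 1$, I would simply observe that a non-zero univariate polynomial $p$ has at most $\deg(p)$ roots; hence its zero-set is finite and therefore satisfies $\lambda_1 = 0$.

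For the inductive step, assume the claim holds in dimension $n-1$. Given a non-zero polynomial $p(x_1, \dots, x_n)$, I would regard it as a polynomial in the single variable $x_n$ whose coefficients are polynomials in $(x_1, \dots, x_{n-1})$, writing $p = \sum_{k=0}^{d} a_k(x_1, \dots, x_{n-1}) x_n^k$, where $d$ is the largest index for which $a_d$ is not identically zero (such a $d$ exists precisely because $p \not\equiv 0$). Denoting the zero-set by $Z$ and its slice at $y = (x_1, \dots, x_{n-1})$ by $Z_y = \{x_n \in \mathbb{R} \,|\, p(y, x_n) = 0\}$, I would note that $Z$ is closed, being the preimage of $\{0\}$ under a continuous map, hence measurable, so that Fubini--Tonelli applies and gives $\lambda_n(Z) = \int_{\mathbb{R}^{n-1}} \lambda_1(Z_y)\, dy$.

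The crux is then to show that $\lambda_1(Z_y) = 0$ for almost every $y$. Let $N = \{y \in \mathbb{R}^{n-1} \,|\, a_d(y) = 0\}$; since $a_d$ is a non-zero polynomial in $n-1$ variables, the inductive hypothesis yields $\lambda_{n-1}(N) = 0$. For every $y \notin N$, the polynomial $p(y, \cdot)$ has degree exactly $d$ in $x_n$ and is thus a non-zero univariate polynomial, so $Z_y$ is finite and $\lambda_1(Z_y) = 0$. The integrand therefore vanishes outside the null set $N$, the integral is zero, and $\lambda_n(Z) = 0$, completing the induction. The only delicate point is securing the measurability required to invoke Fubini--Tonelli, but this follows at once from the continuity of $p$; the remaining steps are the routine bookkeeping of isolating a genuine leading coefficient $a_d$ and splitting the base into the good locus $\mathbb{R}^{n-1} \setminus N$ and the null locus $N$.
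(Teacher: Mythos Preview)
Your proof is correct and follows the standard induction-on-dimension argument via Fubini--Tonelli. Note, however, that the paper does not actually prove this lemma: it is stated with a citation to an external reference and used as a black box in the appendix, so there is no ``paper's own proof'' to compare against. Your argument is a clean, self-contained justification of the cited fact.
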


We note the samples $\omega_N = \{\delta_1, \dots, \delta_N\}$. Moreover, we say that a subset $\beta \subset \omega_N$ is an \emph{essential set} for $\omega_N$ if $(\lambda^*(\beta), c^*(\beta)) = (\lambda^*(\omega_N), c^*(\omega_N))$. Since the objective of the optimization problem \eqref{scenario_program_1} is strongly convex, there exists unique $\lambda$ and $\{P_u\}_{u \in \mathcal{U}}$ such that, for all essential sets $\beta$, the following holds for all $(x, e, w) \in \beta$:
\begin{equation} \label{equality_deg}
	(A_{\sigma^{(l)}(e)}x + w)^T P_{t(e)} (A_{\sigma^{(l)}(e)}x + w) = \lambda^l x^T P_{s(e)} x.
\end{equation}
Denote $I$ the set of indices $i$ in $\omega_N$ such that \eqref{equality_deg} holds for $\delta_i = (x, e, w)$. Now, let us look at the probability to sample a degenerate sample set\footnote{The set of \emph{degenerate} sample sets refers to the complement of the set of non-degenerate sample sets, as defined in Definition~\ref{non_degenerate_property}.}. Since the optimization problem always has at least one tight constraint, if the sample set is degenerate, it must have at least two tight constraints. Then, by symmetry, this probability is
\begin{equation} \label{inequality_degenerate}
\begin{aligned} 
	&\sum_{i = 2}^{N} \begin{pmatrix} N \\ i \end{pmatrix} \pi_l^N(\{\omega_N \, | \, \omega_N \text{ is degenerate and } I = [i] \})  \\
	\leq &\sum_{i = 2}^{N} i\begin{pmatrix} N \\ i \end{pmatrix} \pi_l^N(\{\omega_N \, | \, I = [i] \text{ and } \{\delta_j\}_{j \in [i-1]} \text{ is an essential set for }\omega_N  \}).
\end{aligned}
\end{equation}
Indeed, if the sample set $\omega_N$ is degenerate and $I = [i]$, one of the constraints can be removed from $I$ without changing the solution. Let 
\begin{equation}
	p_i = \pi_l^N(\{\omega_N \, | \, I = [i] \text{ and } \{\delta_j\}_{j \in [i-1]} \text{ is an essential set for }\omega_N  \}), 
\end{equation}
we will show by contradiction that $p_i = 0$ for $i = 2, \dots, N$. Suppose that $p_i > 0$, then we have
\begin{equation} \label{non_zero_samples}
	\pi_l(\{(x, e, w) \in \Delta \,|\, (A_{\sigma^{(l)}(e)}x + w)^T P_{t(e)} (A_{\sigma^{(l)}(e)}x + w) = \lambda^l x^TP_{s(e)}x \}) > 0.
\end{equation}
Now, for any $e \in \mathcal{E}$, we define the polynomial
\begin{equation} \label{quad_pol}
	p_e(x, w) = 
	\begin{pmatrix}
		x^T & w^T
	\end{pmatrix}
	\begin{pmatrix}
		A_{\sigma^{(l)}(e)}^TP_{t(e)}A_{\sigma^{(l)}(e)} - \lambda^l P_{s(e)} & A_{\sigma^{(l)}(e)}^TP_{t(e)} \\
		P_{t(e)}A_{\sigma^{(l)}(e)} & P_{t(e)}
	\end{pmatrix}
	\begin{pmatrix}
		x \\ w
	\end{pmatrix}, 
\end{equation}
and let $\mu = \mu^x \times \mu^w$. Since $\mathcal{E}$ is finite, \eqref{non_zero_samples} implies that there exists $e \in \mathcal{E}$ such that $\mu(\{(x, w) \in \mathbb{S}^n \times B^n[W] \,|\, p_e(x, w) = 0 \}) > 0$.
Now, note that if $(x, w)$ is a root of the polynomial $p_e$, so is $(\alpha x, \alpha w)$, for $\alpha > 0$. Therefore, for all $\alpha > 0$, $\mu(\{(x, w) \in \alpha \mathbb{S}^n \times \alpha B^n[W] \,|\, p(x, w) = 0\}) > 0$.
This implies that the set $\{(x, w) \in \mathbb{R}^n \times \mathbb{R}^n \,|\, p(x, w) = 0\}$ has non-zero Lebesgue measure, since so is its projection in the $x$-coordinate (thanks to the multiplication by $\alpha$), and its projection on the $w$-coordinate (by Assumption~\ref{ass_on_w_dis}). Therefore, by Lemma~\ref{zero_set_lemma}, the polynomial $p_e(x, w)$
is identically zero, which implies $P_{t(e)} = 0$ following Equation~\ref{quad_pol}. However, since $P_{t(e)}$ is a solution of the program \eqref{scenario_program_1}, $P_{t(e)} \succeq I$ following constraint \eqref{PgeqI}, which yields a contradiction. Therefore, for any $i = 1, \dots, N$, $p_i = 0$, and the proof is completed following inequality \eqref{inequality_degenerate}. \qed

	\bibliographystyle{ieeetr}
	\bibliography{ref.bib}

\end{document}